\documentclass[runningheads]{llncs}
\usepackage[T1]{fontenc}

\usepackage{mathbbol}%
\usepackage{graphicx}
\usepackage{xspace}
\newcommand{\vp}{$\mathrm{VP}$\xspace}

\newcommand{\vpd}{$\mathrm{VP(\delta)}$\xspace}

\newcommand{\rvp}{$\mathrm{R\mbox{-}VP}$\xspace}

\newcommand{\trap}[1]{{\sf{TrapIn}}-$(#1)$}

\usepackage{xcolor}

\usepackage{soul}

\usepackage{algorithm}
\usepackage{algpseudocode}

\usepackage[normalem]{ulem} 
\usepackage{xcolor}

\newcommand{\move}{{\sf Move-to-Unflagged }}
\newcommand{\movek}{{\sf Generalized-Move-to-Unflagged }}

\begin{document}
\title{Online Treasure Hunt in Vertex-Permuted Dynamic Rings}
%
%

\author{
Kamran Ayoubi\inst{1}\orcidID{0009-0008-7768-8639}
\and
Bernard Mans\inst{2}\orcidID{0000-0001-7897-2043}
\and
Lata Narayanan\inst{1}\orcidID{0000-0002-3875-0371}
}

\authorrunning{K. Ayoubi et al.}

\institute{
Department of Computer Science and Software Engineering,
Concordia University, Montreal, Canada\\
\email{kamran.ayoubi@concordia.ca, lata.narayanan@concordia.ca}
\and
School of Computing, Macquarie University, Sydney, Australia\\
\email{bernard.mans@mq.edu.au}
}

\maketitle              
\begin{abstract}
 We study the problem of treasure hunt by a group of \(k \geq 1\)
agents in {\em vertex-permuted dynamic rings} (\vp). In this model, the $n$ vertices remain on a ring but are permuted at each time step.

We first show that treasure hunt is impossible for any $k \leq n-3$ agents, if there are no restrictions on the sequence of permutations used in the dynamic ring.  We then study the \vpd~setting, in which for every pair $i, j$ of vertices,  the edge $(i, j)$ is guaranteed to appear within $\delta$ steps. We  show that the class \vpd~is feasible only for $\delta \ge \left\lceil \frac{n-1}{2}\right\rceil$. For the one-agent case, we show a tight bound of $\Theta(\delta n)$ on the worst-case search time as well as competitive ratio of any online algorithm for treasure hunt, provided $\delta \ge 2n$. We then give an optimal algorithm for $k$ agents, thereby  showing that $k$ agents can obtain a speedup of $k$ on the worst-case search time. Finally, in the \rvp~setting,  in which in every step, the vertices are arranged as a ring according to a {\em random}  permutation, we show that treasure hunt takes expected $\Theta(n)$ steps against an oblivious adversary and $\Theta(n \log n)$ steps against an adaptive adversary.

\keywords{Online treasure hunt  \and vertex-permuted rings \and temporal graphs.}
\end{abstract}
\section{Introduction}
Modern networks are dynamic rather than static, with links between entities changing over time. These networks are ubiquitous and occur in critical settings, such as 
wireless and mobile networks, transportation networks, sensor networks, and systems where nodes or links can fail
or be turned off. Moreover, changes can be arbitrary, making it challenging to design algorithms for such networks.

In this paper we study the problem of {\em treasure hunt} or search by a \emph{group of agents} in a dynamic ring network. There is a special vertex $T$ (the \emph{treasure}), and $k \geq 1$ mobile agents located at potentially different vertices of the  $n$-node ring start looking for the treasure at the same time. 
Agents are identical, anonymous, memoryless, and operate synchronously and {\em online}:  at each time step $t$ they observe their neighbors in the current graph $G_t$, and then decide
whether to move to a neighboring vertex or to stay, without any knowledge of future graphs, or memory of past actions.  Their collective goal is to minimize the  time that the {\em first} agent reaches the treasure.

We focus on \textit{vertex-permuted rings} (\vp), a dynamic ring in which, in every time step, the vertex set is permuted and then arranged in a ring topology \cite{agarwalla2018deterministic,AyoubiNarayananSAND25}. 
A vertex-permuted graph is a temporal graph in which the structure of the graph remains fixed at each time step throughout the graph’s lifetime, though the vertices may assume different positions in the graph. For example,  cyclists who ride in pelotons,  swap their positions after specific time intervals as it is more tiring to  be in front of the formation. In the natural world, it has been observed that birds flying in formation or fish swimming in formation swap positions while maintaining the same structure of the formation. In the context of sensor networks that have clusters configured in a star topology, different nodes  periodically assume the role of clusterhead, in order to balance energy consumption across cluster members. All the above are examples of vertex-permuted graphs. 

We  show that treasure hunt is impossible in a vertex-permuted ring if there is no restriction on the dynamicity. In view of this impossibility, we consider restrictions on the dynamicity.
 We consider the class \vpd: for every ordered pair $(u,v)$ with $u\neq v$, and every starting time $t$, the vertices $u$ and $v$ appear as neighbors in some step
$s\in\{t,t+1,\ldots,t+\delta-1\}$. 
Finally, we consider a {\em random} version  of \vp. In \rvp, in every step, a random permutation of the vertices is arranged as a ring. 

\smallskip




\subsection{Overview of results}


We first show that for any $k$ with $1 \leq k \leq n-3$ and for every randomized or deterministic online algorithm, there are inputs on which no agent can ever reach the treasure. 

For \vpd, using an interesting connection to the Walecki decomposition of a clique into Hamiltonian cycles~\cite{alspach2008wonderful,Lucas1896}, we first note that the class is non-empty if and only if $\delta\ge \lceil(n-1)/2\rceil$. 
Next we show a tight bound of $\Theta(\delta n)$ on the search time and on the competitive ratio for a single deterministic agent in \vpd~for $\delta \geq 2n$, provided agents have the ability to {\em flag} a node when they visit it, and subsequently, agents at neighboring nodes can see the flag. In contrast, if agents do not have the ability to flag nodes, we show that no deterministic single-agent algorithm can solve treasure hunt in \vpd. We then give a randomized algorithm that can find the treasure in expected $O(\delta^2)$ time.

For $k$ agents, we show a lower bound of 
$(\lceil (n-2)/k\rceil-2)\,\delta$, provided $\delta \geq (k+1)(n-k-2)$ on the time any agent can reach the treasure; we give an algorithm that matches this bound asymptotically.

For upper bounds,  we assume that agents are anonymous and memoryless, have the ability to flag, and can see flags at distance one, but do not have chirality and do not know any node labels. Our matching lower bounds hold even if we assume agents have memory, chirality, full knowledge of node labels and visibility of the entire graph.

To our knowledge, the idea of allowing agents to plant flags at visited nodes that can be seen subsequently at distance one is novel. We note that flags constitute a compact (one-bit) but powerful form of persistent memory that obviates the need for agents to know or remember identities of nodes they have visited,  and to find out about nodes that other agents have visited either by sending/receiving messages, or via an increased visibility range.

In \rvp, using an equivalence with random walks in cliques, we show that the expected time for an agent to reach the treasure  is $\Theta(n)$ for an oblivious adversary and it i$\Theta(n \log n)$ for an adaptive adversary who knows the input sequence of random graphs.

\smallskip

\subsection{Organization}

Section 2 discusses related work, and Section 3 defines the model and the
problem. Section 4 proves impossibility in unrestricted vertex-permuted rings.
Section 5 studies the single-agent case in $VP(\delta)$ and $R$-VP. Finally, Section 6
studies the multiple-agent case in $VP(\delta)$.

\section{Related work}\label{sec:related}

\emph{Temporal graphs} (also called time-varying or dynamic or evolving graphs) model networks whose edges change over time, see, e.g., the TVG survey of Casteigts \emph{et al.}~\cite{casteigts2012time} and the distributed-computing model of
Kuhn, Lynch, and Oshman, who introduced $T$-interval connectivity~\cite{casteigts2012time,KuhnLO10}.
\emph{Dynamic rings} in particular have been studied for example, for the problem of gathering and dispersion \cite{agarwalla2018deterministic,DILUNA202079} and exploration \cite{Mandal,di2020distributed,IlcinkasW18,DiLunaDFS16}.

\emph{Vertex permutation dynamism} is a  temporal model in which every snapshot is isomorphic to a fixed
base graph, obtained by permuting vertex identities between steps. 
Vertex permuted rings were introduced in~\cite{agarwalla2018deterministic}, which studied the problem of dispersion in such graphs. 
In~\cite{AyoubiNarayananSAND25}, the authors study restless
exploration in vertex-permuted temporal graphs of arbitrary topology and show a precise characterization of
the base graphs for which restless exploration is always possible.  
The concept of exploration by restless agents (agents who must move to a neighboring node in every step) was introduced in \cite{bellitto2023restless}.

\emph{Search or treasure hunt}  has been studied in many settings, including continuous domains (e.g., the cow-path problem, first studied in \cite{Beck1964}) and discrete graphs \cite{Awerbuch1999,Bouchard2023}. 
More recently, multi-agent search has been studied, under different models of communication between the agents, different speeds of agents, various fault models, and so on.

Search in a graph \cite{Awerbuch1999,Bouchard2023,Pattanayak2024}
is similar to graph exploration; in the former, agents are looking for a specific node or object in the 
graph, while the goal of exploration is to visit every node in the graph. 
The exploration of temporal graphs, introduced by Michail and Spirakis \cite{Michail03072016}, and studied in \cite{ErlebachHK21} addresses the problem of designing temporal walks that visit all vertices of a dynamic graph. 
Most work on exploration in temporal graphs is offline, where the entire sequence of graphs is known before designing the trajectory of agents. There is relatively little work on {\em online} exploration in dynamic graphs. In \cite{di2020distributed} and \cite{Mandal}, online exploration of 1-interval connected dynamic rings was studied.
However, this is a very different model of dynamicity than ours. 


\emph{ Tokens} or {\em pebbles} are a common way to strengthen the extremely limited capacity of the agents for  exploration tasks, by allowing a small amount of persistent
state in the environment~\cite{DobrevEtAl2013Tokens}.
Generally speaking, agents can place pebbles at a node, and are only visible to agents when they are actually at the node. Pebbles were used in the context of treasure hunt in \cite{Gorain2022PebbleGO,Disser2018,Das2024CollisionfreeEB}. 
The flags that we use in our paper are similar to pebbles, but differ in that we assume the flag that can be planted by an agent when visiting a node is  visible to agents at neighboring nodes in all subsequent steps. They also differ from the model of mobile agents with lights \cite{Das2023} in which it is the robots that are equipped with lights that enable some communication with  other robots, whereas flags are left behind by robots at nodes.

When the dynamics are random rather than adversarial, agent trajectories are often analyzed using random-walk tools (hitting times, cover times) on temporal graphs.  A representative example is Avin, Kouck\'y, and Lotker,
who relate the evolution rate of the graph to cover-time behavior~\cite{AvinKouckyLotker2018CoverMixing}.

\section{Model}\label{sec:model} 
\noindent{\bf Dynamic graph model.} For our purposes, a  {\em dynamic graph} or a {\em temporal graph}  is  a sequence of graphs $G_0 = (V, E_0), G_1 = (V, E_1), G_2 = (V, E_2),$ $ \dots, G_L = (V, E_L)$, where each $G_i = (V, E_i)$ is a static undirected graph, and all $G_i$ have the same vertex set $V$, but possibly different edge sets. The number $L$ is called the \textit{lifetime} of the temporal graph.
Each graph $G_i$ corresponds to the state of the temporal graph at time step $i$, with an edge $e \in E_i$ representing a connection between two vertices that is available at time $i$. The sequence of graphs models the evolution of a network over time \cite{ErlebachHK21}. 

In this paper, we study  {\em vertex-permuted rings (\vp)} with $n$ nodes in which each snapshot $G_t$ is a  cycle
on $V$, obtained by an arbitrary permutation of the nodes arranged as a ring. Nodes are anonymous, that is, they may have labels, but  they are unknown to the agents. In each time step $t$, in the graph $G_t$, every node $v$ is connected to its two neighbors in the ring via distinctly labelled ports; the labelling of the ports is arbitrary and may not provide a globally consistent orientation.

For  $\delta\in\mathbb{N}$, the class  $\mathrm{VP}(\delta)$ consists of all dynamic graphs in \vp~ in which, for every pair
$\{u,v \}$ with $u\neq v$ and every starting time $t = i\delta +1 $ for $i \in \mathbb{N}$ there exists
$s\in\{t,t+1,\ldots,t+\delta-1\}$ such that $u$ and $v$ are adjacent in $G_s$.
Equivalently, every vertex sees every other vertex as a neighbor at least once in every window of
length $\delta$. A dynamic graph in \vpd~does not need to be a periodic graph with period $\delta$. So a graph in \vpd~does not guarantee adjacencies with all other vertices in a {\em sliding window} of length $\delta$. However, it is easy to see that for every graph in \vpd, adjacencies are guaranteed with all other vertices in a sliding window of length $2 \delta$. Thus, in the rest of the paper, we do not talk about sliding windows, and instead when speaking of \vpd, consider a graph where starting at time 1, time is divided into windows of $\delta$ steps, and in each window, each vertex is guaranteed to be adjacent to every other vertex in at least one step during the window. 



\noindent{\bf Agents and actions. }
There are $k\ge 1$ agents.  Agents are identical and anonmyous: they do not have identities and execute the same algorithm.  Agents are memoryless in the sense that they do not remember actions or information from the past in the current step. 

One of the nodes of the graph contains a treasure; we assume this treasure can only be seen when an agent is located at the same node as the treasure. To find the treasure, agents have to visit the nodes of the graph. In the $k \geq 2$ case, agents work in synchronous discrete time steps, using the well-known Look-Compute-Move model \cite{di2020distributed}.




We consider two models of visibility and what an agent can do when it visits a node. In the first model, when an agent visits a node, it is able to mark it as visited by planting a {\em flag} there (we say the agent flags the node), which is visible to any agent at a neighboring node. A node that has not been visited by any agent remains {\em unflagged}. Note that agents have no visibility about nodes at a distance more than one in any step. 
In the second model, an agent cannot plant a flag, and cannot distinguish between nodes that were previously visited from nodes that have never been visited, even when those nodes are adjacent.

Agents are {\em silent} - they do not communicate with agents at other nodes by sending messages. Planting flags to mark nodes as visited can be seen as a form of indirect communication between agents. There is no other communication between agents.

Each agent has a consistent private orientation of the ring, which designates each port as $cw$ or $ccw$. As in ~\cite{DILUNA202079}, access to the ports is by mutual exclusion: if multiple agents request the same port, exactly only of them succeeds. Requests to the two ports are arbitrated independently. 

Our lower bounds are valid for stronger models of agents; the specific ways in which agents can be strengthened are mentioned in the theorem statements. 

\vspace*{0.1in}
\noindent{\bf The treasure hunt problem.}
An instance of the treasure hunt problem is a dynamic graph $G_0, G_1, \ldots, G_L$, a set of $k \geq 1$ agents located at the same vertex in $G_0$, and a treasure located at a fixed vertex $T$ unknown to the agents. Each agent chooses its action at time $t$ based only on what it sees in its neighborhood at time $t$ and without any knowledge of $G_{t+1}$ to $G_L$. 
Given an instance of the treasure hunt problem, the goal is to minimize  $\tau_1^k$,  the first time any agent of $k$ agents reaches $T$.

We are also interested in analyzing the \emph{competitive ratio} of our online algorithms. The competitive ratio of an online algorithm $ALG$ is the worst-case ratio (over all possible input sequences $G_1,\ldots,  G_L$) of the cost of $ALG$ to the cost of an optimal offline algorithm that knows the entire input sequence in advance. Although we describe our lower bound arguments in the classic request-answer game framework, where the adversary chooses the next graph $G_t$ based on the actions of the agent so far, it is known that the lower bound also holds against an oblivious adversary for deterministic algorithms, and  against an adaptive offline adversary for randomized algorithms \cite{Borodin05}.

\section{Treasure hunt in unrestricted VP}\label{vp~with no restrictions}

In this section, we show that treasure hunt is impossible for any number of agents in \vp~when there are no restrictions on the permutations in every step.

\begin{figure}[t]
  \centering
  \includegraphics[width=0.9\linewidth]{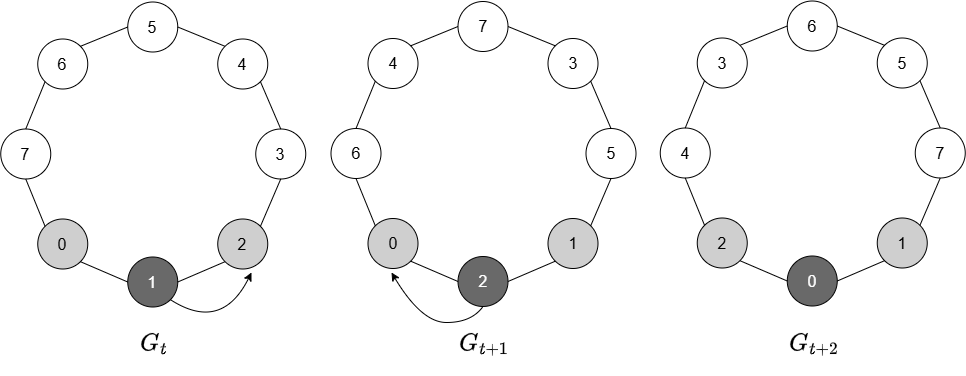}
  \caption{Illustration of the adversarial procedure \trap{3} on a vertex-permuted ring. 
  The dark vertex is the agent's current position; the other two shown vertices are its neighbors in the
  current snapshot.}
  \label{fig:trapin3}
\end{figure}

An adversarial scheduler can confine the agent to at most three nodes by using the following
\trap{3} procedure. Suppose the agent is at a node $u$ whose neighbors are $v$ and $w$ in $G_t$.
The graph $G_{t+1}$ chosen by the adversary depends on the agent's action in step $t$.
If the agent stays in $u$, then $G_{t+1} = G_t$.
If the agent moves to $v$, then in $G_{t+1}$ the adversary swaps the positions of $v$ and $u$, so
that $v$ is now in the middle. Similarly, if it moves to $w$, the adversary swaps the positions of
$u$ and $w$ so that $w$ is in the middle. In this way, the agent can never visit any nodes except
for $u, v,$ and $w$. Figure~\ref{fig:trapin3} gives a concrete example of \trap{3} with $\{u,v,w\}=\{0,1,2\}$.

\begin{figure}[t]
  \centering
  \includegraphics[width=0.95\linewidth]{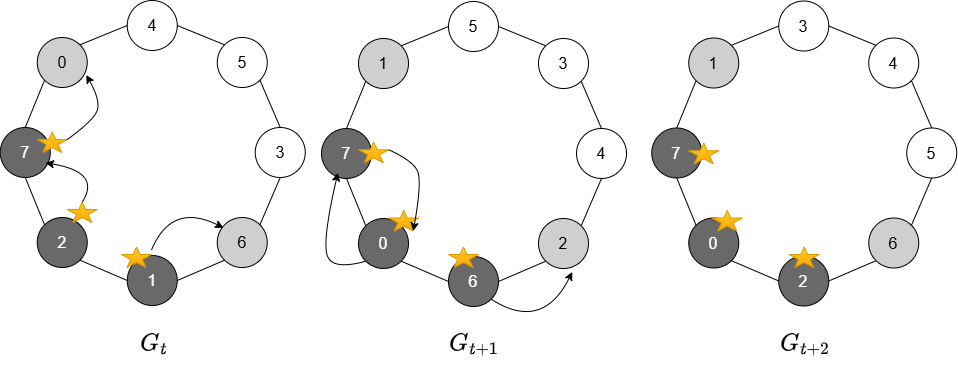}
  \caption{Illustration of the adversarial procedure \trap{k+2} for $k$ agents in a vertex-permuted ring.
Stars denote agents. The adversary keeps all agents inside a
contiguous block $A$ of $k{+}2$ vertices; whenever an agent reaches a boundary vertex of $A$, it swaps that
boundary with a free vertex of $A$ in the next snapshot, restoring confinement. In $G_t$, $a_t=0$ and $b_t =6$}
  \label{fig:trapin-k2}
\end{figure}

\begin{theorem}\label{thm:VP-delta-k-agents}
For any randomized or deterministic algorithm with $k$ agents starting at the same node, where $1 \leq k \leq  n-3$, treasure hunt is impossible, even if agents have memory, chirality, full visibility of the graph, and know the node labels.
\end{theorem}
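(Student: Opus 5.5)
We will prove the statement with the adversarial procedure \trap{k+2} of Figure~\ref{fig:trapin-k2}, which generalizes \trap{3}. We maintain the invariant that at every time $t$, all $k$ agents lie inside a set $A_t$ of exactly $k+2$ vertices, and that $A_t$ occupies a contiguous arc of the ring $G_t$ with endpoints $a_t$ and $b_t$. The adversary then fixes the treasure $T$ at a vertex outside $A_0$. Since $k+2 \le n-1$, such a vertex exists. We will make $A_t = A_0$ for all $t$, so no agent ever reaches $T$.

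\emph{Initialization.} At time $0$ all agents sit at one vertex $u$. Take $A_0$ to be any set of $k+2$ vertices containing $u$, laid out contiguously in $G_0$ with $u$ strictly inside the arc. The vertices outside $A_0$ fill the remaining arc arbitrarily.

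\emph{Inductive step.} Suppose the invariant holds at time $t$, and suppose every agent at an endpoint ($a_t$ or $b_t$) was placed there by the adversary's own choice. The dangerous moves are those from $a_t$ or $b_t$ to the outside neighbor. To avoid that case, we strengthen the invariant: \emph{no agent occupies $a_t$ or $b_t$}. Then every agent is at an interior vertex of the arc, and both of its neighbors lie in $A_t$. Whatever the agents do at step $t$ (including randomized choices, once they are revealed, since we argue against an adaptive adversary), they end at vertices of $A_t$. At most $k$ vertices of $A_t$ are occupied, so at least two vertices of $A_t$ are unoccupied. The adversary forms $G_{t+1}$ by placing the occupied vertices of $A_t$ in the interior of the arc, in any order, and two unoccupied vertices of $A_t$ at the two ends. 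The outside vertices go on the complementary arc. This is exactly the swap in the figure: an agent that reaches a boundary vertex gets that vertex exchanged with a free vertex of $A$. The strengthened invariant then holds at time $t+1$ with $A_{t+1} = A_t$. Initialization satisfies it because $u$ is interior and the agents are only at $u$.

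\emph{Conclusion.} By induction, the agents' positions stay within $A_0$ forever, so $T \notin A_0$ is never visited. The argument uses nothing about the agents' capabilities. Memory, chirality, full visibility and node labels only affect which positions inside $A_t$ the agents choose, and the adversary reacts to whatever positions occur. For randomized algorithms, the adaptive adversary builds $G_{t+1}$ after seeing the realized moves, so the agents are confined with probability $1$. By the remark in the model section on lower bounds \cite{Borodin05}, this covers the stated settings.

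\emph{Main obstacle.} The key counting point is to guarantee two free vertices at every step even though agents may collide or spread out. This is exactly why the block has size $k+2$, and the condition $k+2 \le n-1$ leaves room for $T$ outside. One subtlety must be checked: port mutual exclusion and simultaneous moves never take an agent outside $A_t$ in a single step. This holds because agents only move along edges of $G_t$, and every neighbor of an interior vertex lies in $A_t$.
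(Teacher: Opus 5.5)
Your proof is correct and uses the same \trap{k+2} confinement as the paper. A block of $k+2$ vertices holding $k$ agents always has at least two agent-free vertices, the adversary keeps those at the two ends of the block in the next snapshot, and the treasure sits outside the block. The one difference is that you impose the block from time $0$ around the start vertex, whereas the paper first presents a static ring until the agents span $k+2$ nodes. Your initialization is the more robust of the two, because a single agent or a compact group of agents could sweep a static ring without ever spanning $k+2$ nodes.
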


\begin{proof}
For one agent, the \trap{3} procedure foils any deterministic or randomized agent from reaching the treasure.
The \trap{3} procedure can be generalized to any $k \leq n-3$ agents; we call the procedure \trap{k+2}
(see Figure~\ref{fig:trapin-k2}).
Given an online algorithm for treasure hunt, the adversary keeps presenting the same permutation until a
time $t$ when the $k$ agents are spread $k+2$ apart.
Let $a_t$ be the farthest node from $0$ going clockwise that contains an agent and let $b_t$ be the
farthest such node going counterclockwise, and let $A$ be the set of $k+2$ nodes between $a_t$ and $b_t$.
There is at least one node in the ring that is not in $A$; the adversary places the treasure in such a node.

Also, there must be two nodes $c_t, d_t \in A$ that are not occupied by any agents.
In time $t+1$, the adversary exchanges the positions of $c_t$ and $d_t$ with $a_t$ and $b_t$ respectively.
The agents are now in the ``interior'' nodes of $A$ and cannot reach the treasure in step $t+1$.

In every subsequent step, as long as the agents stay within the interior nodes of $A$, the adversary does
not change the permutation. If an agent moves to a ``boundary'' node, the adversary swaps the position of
the agent-containing boundary node with an interior node that does not have an agent in the next step.
Thus the agents are trapped in the set $A$ of nodes and can never reach the treasure; knowing and remembering labels of nodes, chirality, and having complete visibility does not help. 
\end{proof}

It is straightforward to see that a similar procedure can be employed even if $k$ agents do not start at the same node, provided that $n$ is large enough compared to $k$, by trapping subsets of agents in separate sets of nodes; we omit the details. 

\section{Treasure hunt by a single agent in restricted models of \vp}
 
 In light of the impossibility of treasure hunt in unrestricted \vp, we consider restricted versions of \vp in this section, namely \vpd and \rvp. 

\subsection{The \vpd model}

We first identify the minimum value of $\delta$ for which the class \vpd is
non-empty. For valid values of $\delta$, we  give a single-agent algorithm for the model when agents can flag vertices when visiting them, and then prove a matching lower bound. For the model with no flags, we show that on the one hand, no deterministic agent can perform treasure hunt, and on the other hand, there exists a randomized algorithm that completes treasure hunt in expected $O(\delta^2)$ time.

\subsubsection{Feasibility threshold for \vpd~ on a ring}
\label{sec:feasibility}

We start by considering for which values of $\delta$ the class \vpd~ is non-empty.
\begin{proposition}\label{lem:delta-threshold}
 $\mathbf{VP}(\delta)$ is non-empty if and only if
$
\delta \;\ge\; \Big\lceil\frac{n-1}{2}\Big\rceil.
$
\end{proposition}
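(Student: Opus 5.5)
The plan is to prove the two directions separately: a counting argument for necessity and the Walecki decomposition for sufficiency.

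\emph{Necessity.} Fix any window of $\delta$ consecutive steps. In each step $G_s$ is a Hamiltonian cycle on $V$, so it has exactly $n$ edges. Over the whole window the snapshots therefore cover at most $\delta n$ distinct unordered pairs. The definition of \vpd~requires all $\binom{n}{2}=n(n-1)/2$ pairs $\{u,v\}$ to be adjacent at some step of the window, so
\[
\delta n \ge \frac{n(n-1)}{2}, \qquad\text{that is,}\qquad \delta\ge \frac{n-1}{2}.
\]
Since $\delta$ is an integer, $\delta\ge\lceil (n-1)/2\rceil$.

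A per-vertex version of the same count gives the bound directly. Each vertex has two neighbours per step, so it meets at most $2\delta$ distinct vertices in a window. It must meet all $n-1$ others, hence $2\delta\ge n-1$.

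\emph{Sufficiency for odd $n$.} Here $\lceil (n-1)/2\rceil=(n-1)/2$. Walecki's construction~\cite{alspach2008wonderful,Lucas1896} decomposes $K_n$ into $(n-1)/2$ edge-disjoint Hamiltonian cycles $H_1,\dots,H_{(n-1)/2}$. Concretely, place one vertex at the centre and the others on a regular $(n-1)$-gon, take the zigzag Hamiltonian path through the polygon vertices, close it through the centre, and rotate it. Set $G_s=H_{((s-1)\bmod (n-1)/2)+1}$, repeating periodically. Every window $\{i\delta+1,\dots,(i+1)\delta\}$ with $\delta\ge (n-1)/2$ contains all the cycles, so every pair is adjacent within it. For larger $\delta$ the condition only gets weaker, so the same sequence (or any sequence containing each $H_j$ in each window) works.

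\emph{Sufficiency for even $n$.} Now $\lceil (n-1)/2\rceil=n/2$. $K_n$ has no Hamiltonian decomposition, but it decomposes into $(n-2)/2$ Hamiltonian cycles plus a perfect matching $M$ (Walecki again, or $K_{n-1}$'s decomposition with an extra vertex inserted). Use the $(n-2)/2$ cycles as snapshots, then add one more snapshot: any Hamiltonian cycle containing $M$, obtained by chaining the matching edges with arbitrary connecting edges. This gives $n/2$ cycles covering every pair, and the periodic schedule works for every $\delta\ge n/2$.

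The main obstacle is making the even case rigorous. I will either cite the standard fact that $K_n$ minus a perfect matching is Hamiltonian-decomposable, or construct it explicitly by taking Walecki's decomposition of $K_{n-1}$ into $(n-2)/2$ cycles and inserting the new vertex $x$ into each cycle $H_j$ by replacing one edge $a_jb_j$ with the path $a_j x b_j$. The removed edges $a_jb_j$ together with the edges $xa_j,xb_j$ not yet used must then be covered by the final Hamiltonian cycle, and choosing the removed edges so that this last cycle exists needs care. If that gets messy, I will fall back on citing the known decomposition of $K_{2m}-I$ into Hamiltonian cycles, which the paper's reference~\cite{alspach2008wonderful} covers.
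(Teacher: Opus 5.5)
Your proposal is correct and follows essentially the same route as the paper: the per-vertex count $2\delta\ge n-1$ for necessity (your edge count $\delta n\ge\binom{n}{2}$ is an equivalent variant), Walecki's decomposition repeated periodically for odd $n$, and for even $n$ the $(n-2)/2$ Hamilton cycles plus the leftover perfect matching completed into one extra Hamilton cycle, which is what the paper calls ``antipodal pairs folded into a cycle''; citing the standard decomposition of $K_n$ minus a perfect matching suffices there. One small wording point: for $\delta$ above the threshold, the right justification is the one you already give, namely that periodicity puts a full period inside every window of length $\delta$; the remark that ``the condition only gets weaker'' is not literally true under the paper's aligned-window definition, since a window of length $\delta'$ need not contain an aligned window of length $\delta$ when $\delta<\delta'<2\delta-1$.
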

\begin{proof}
Fix a vertex $u$. In any $\delta$-window, $u$ has exactly $2\delta$ (ordered) neighbor ``slots.'' To see all
$n-1$ other vertices at least once as a neighbor in the window, we need $2\delta\ge n-1$ i.e., the stated
lower bound. For odd $n$, a Walecki decomposition~\cite{alspach2008wonderful,Lucas1896}  of an $n$-vertex clique gives $(n-1)/2$ edge-disjoint Hamilton cycles whose union covers all edges in the clique. Each of these Hamilton cycles is a vertex-permuted ring. The sequence of all cycles in the decomposition constitutes a sequence of rings in which every node sees every other node as a neighbor. Repeating this sequence  ensures $\mathbf{VP}(\delta)$ with
$\delta=(n-1)/2$. For even $n$, a standard modification (Hamilton cycles plus antipodal pairs folded into a
cycle) yields $\delta=n/2$. Hence the bound is tight.
\end{proof}

\begin{figure}[!htbp]
  \centering
  \includegraphics[width=0.95\linewidth]{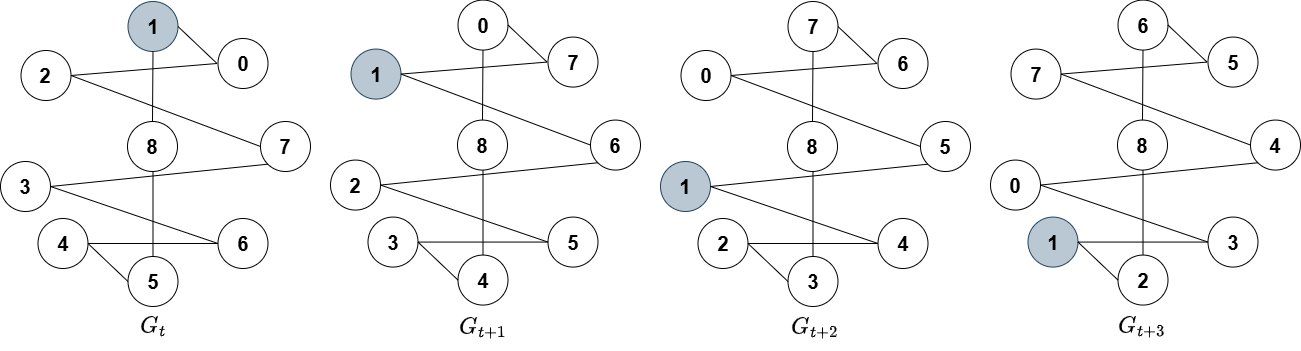}
  \caption{Example of a Walecki-style decomposition for $n=9$ shown as four consecutive vertex-permuted
  rings (since $(9-1)/2=4$). Tracking the highlighted vertex $1$ across the four rings, it becomes adjacent
  to all other vertices within a window of $4$ steps.}
  \label{fig:walecki9}
\end{figure}

\noindent\textbf{Example ($n=9$).}
Figure~\ref{fig:walecki9} illustrates the  construction for $\delta=\frac{n-1}{2}$ when $n$ is odd,
using the standard ``rotation'' view of Walecki's decomposition.
Here, vertex $8$ is kept fixed as a center vertex, and the remaining vertices are arranged on a cycle.
Each subsequent Hamiltonian cycle is obtained by rotating the outer vertices by one position
(counterclockwise in the figure) while keeping the center vertex fixed; this produces the
$(n-1)/2=4$ edge-disjoint Hamilton cycles whose union is $K_9$.

In these four consecutive rings, the highlighted vertex $1$ has neighbors
$\{8,0\}$ in step~1, $\{6,7\}$ in step~2, $\{5,4\}$ in step~3, and $\{2,3\}$ in step~4.
Thus, within $\delta=4$ steps, vertex $1$ is adjacent to every other vertex.
By symmetry, the same holds for every vertex, and by repeating this sequence, we obtain a dynamic ring in $\mathbf{VP}(4)$.

\subsubsection{Upper bound for one agent}

Consider Algorithm \move for a single agent: Until the treasure is found, in every step, if either of the neighbors is unflagged, then move to it (pick an arbitrary one if both neighbors are unflagged), otherwise  wait.

\begin{theorem}
\label{thm:vpd-single}
Algorithm \move solves the treasure hunt problem in at most $\delta(n-1)$ steps.   
\end{theorem}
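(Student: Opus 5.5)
The plan is a potential argument on the number of flagged nodes. The agent flags every node it visits, including the start node at time $0$, and flags are never removed, so the set $F_t$ of flagged nodes at time $t$ only grows. The treasure $T$ is unflagged until the agent first stands on it, since only the agent plants flags. Hence, while the treasure has not been found, $T \notin F_t$ and $|F_t| \le n-1$. I will show that in every window of $\delta$ consecutive steps of the \vpd~partition (windows starting at times $1, \delta+1, 2\delta+1,\ldots$), at least one new node gets flagged unless the treasure has already been found. Since $|F_0| = 1$ and at most $n-2$ further non-treasure nodes can ever be flagged, after at most $n-2$ windows every node other than $T$ is flagged. Within the next window the agent must reach $T$. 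This gives at most $(n-1)\delta$ steps.

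The key step is the per-window progress claim. Fix a window $W=\{t,\ldots,t+\delta-1\}$, and suppose that at time $t$ the agent sits at a flagged node $u$ (it always does, since every node it occupies gets flagged). Suppose further that some node $x$, possibly $T$, is unflagged at time $t$. If at any step $s \in W$ the agent moves, the rule of \move says it moves only to an unflagged neighbor. So it reaches a new node, flags it (or finds $T$ there), and progress is made. Otherwise the agent stays at $u$ for the whole window. By the definition of \vpd, the pair $\{u,x\}$ is adjacent in some $G_s$ with $s\in W$, and at that step $x$ is still unflagged, because no new flag has been planted. So the agent sees an unflagged neighbor and moves, which contradicts staying put. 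Therefore each window either ends the search or strictly increases $|F|$.

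I also need to account for the mutual-exclusion and orientation details. They are irrelevant for a single agent: the arbitrary tie-break when both neighbors are unflagged does not matter, and no chirality is needed because the rule refers only to flags.

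The main obstacle is the counting boundary, that is, getting exactly $\delta(n-1)$ rather than $\delta n$ or an off-by-one. I will handle it by noting that after step $0$ the start node is already flagged. At most $n-2$ windows can be spent flagging the remaining non-treasure nodes. Once only $T$ is unflagged, any move the agent makes lands on $T$, and the adjacency guarantee forces such a move within one more window. The total is at most $(n-2)+1 = n-1$ windows of length $\delta$. A minor point is whether step $0$ lies outside the first window. Since windows start at time $1$, I will treat the move at time $0$ as a bonus step and, if necessary, absorb it by observing that a window in which progress occurs at its first step still counts only once.
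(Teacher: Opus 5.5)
Your proof is correct and is essentially the paper's argument: every move lands on and flags a new node, and the agent cannot stay put through a whole $\delta$-window while some node is unflagged, so with the start node flagged at most $n-1$ windows are needed. Your per-window accounting is slightly more careful than the paper's claim that the agent ``never has to wait longer than $\delta$ steps.'' Under the fixed-window definition of \vpd~that claim can fail between consecutive moves, where the gap can approach $2\delta$, whereas counting one new flag per window gives $\delta(n-1)$ cleanly.
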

\begin{proof}

First, observe that the agent never has to wait longer than $\delta$ steps to see an unflagged vertex as a neighbor, and every time it moves, it flags a new node. It follows that it completes the search within $\delta(n-1)$ steps. 
\end{proof}

\subsubsection{Lower bound for one agent}

We show that for any online deterministic algorithm, the adversary can construct an input such that the algorithm takes $\Theta(\delta n)$ steps, provided $\delta \geq 2n$, thus showing that the simple \move algorithm above is optimal for this range of $\delta$.

The adversary waits until 3 vertices have been visited and flagged. Without loss of generality, let these be vertices $\{0, 1, 2 \}$. We call these the {\em interior vertices}, and assume that they have been visited at time step 0. 
All other vertices $\{3,\dots,n-1\}$ are the \emph{exterior} vertices, which are as yet unflagged.
Time is divided into periods of length $\delta=2n$: the $r$-th period is
$r\delta, r\delta+1, \dots, (r+1)\delta-1$.
For convenience, we speak of three \emph{phases} inside a period.
The first two phases use at most $\lceil n/2\rceil$ time steps, and the three phases making up a period together use at most $2n$ time steps. The following lemma presents the properties of the input in a period.

\begin{lemma}
\label{lem:lb-vpd-single}
In every period of length $2n$, the adversary can present a sequence of vertex-permuted rings in which
(i) every pair of vertices is adjacent at least once, and
(ii) the agent can visit and flag at most one new vertex, even if it has memory, full visibility and knowledge of node labels.
\end{lemma}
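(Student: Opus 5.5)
We will build the period's sequence of rings adaptively, in three phases, keeping one invariant: at the start of each period the agent sits on a flagged vertex and the set $F$ of flagged vertices (initially $\{0,1,2\}$) is contiguous on the ring, with the agent's vertex in the interior of $F$, i.e. flanked by two flagged vertices. Within a period, the only way for the agent to flag a new vertex is to step onto an unflagged neighbour. So it suffices to control who is adjacent to the agent at each step, and to allow at most one such step per period. The adversary sees the agent's position before choosing $G_{t+1}$, and full visibility, memory and labels do not help the agent, because the adversary decides neighbourhoods one step ahead.

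\emph{Phase 1: confine the agent to flagged vertices.} While rings are chosen freely, use the \trap{3}-type rule from the proof of Theorem~\ref{thm:VP-delta-k-agents}. The agent always stands on a flagged vertex whose two ring-neighbours are also flagged. If it moves to one of them, the adversary swaps positions so that the new vertex is again flanked by flagged vertices. Since $|F|\ge 3$ this is always possible. Within this constraint the adversary realises, one step at a time, a Walecki-type sequence (Proposition~\ref{lem:delta-threshold}) of about $\lceil (n-1)/2\rceil \le \lceil n/2\rceil$ rings on the exterior vertices and on the non-agent vertices. This makes every pair not involving the agent's current vertex adjacent at least once. Re-embedding the three-vertex block around the agent changes only $O(1)$ adjacencies per step. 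I would either pad with a few extra steps or run a second Walecki pass in phase 2 to catch pairs that were broken.

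\emph{Phase 2: let the agent see everything, releasing one vertex.} The remaining requirement is that the vertex (or vertices) the agent occupied must become adjacent to every other vertex, including all exterior ones. The adversary therefore lets the agent's current vertex have an exterior neighbour. Until the agent actually steps onto an unflagged vertex, the adversary rotates fresh exterior vertices into the agent's two neighbour slots, two per step, so all $n-3$ exterior vertices appear within about $n/2$ steps. The moment the agent moves onto a new vertex $x$, the adversary (i) adds $x$ to $F$, (ii) re-arranges the ring so that $x$ is flanked by flagged vertices, and (iii) switches back to the confinement rule for the rest of the period. No second new vertex can then be reached. The agent's new vertex $x$ still has to meet everyone. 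Since $x$ was previously exterior, it already met most vertices in phase 1, and the missed ones are handled in phase 3.

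\emph{Phase 3: clean-up.} In the remaining roughly $n$ steps, the adversary keeps the agent trapped among flagged vertices. Because the agent can be moved only among flagged vertices, its vertex can be placed next to flagged vertices only. So the pairs still missing are covered as follows: pairs between the agent's vertex and flagged vertices, by rotating flagged vertices through its two slots; all other pairs, by the free positions elsewhere on the ring. The counting is $2$ new neighbours per step, at most $n$ vertices, so $n/2$ steps suffice, and the total stays within $2n$.

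\textbf{Main obstacle.} The hard part is the pairs (agent's vertex, exterior vertex) when the agent never leaves flagged territory. If the agent refuses to move in phase 2, the adversary must still show it all exterior vertices, and then the agent can step out. I would resolve this by showing exteriors only in the \emph{last} step of phase 2, two at a time, in such a way that at most one step of "exposure" is ever acted upon. After the first move, the adversary re-traps the agent, and the pairs not yet covered (the old agent vertex versus the remaining exteriors) are covered by making that old vertex, now no longer holding the agent, adjacent to them freely. This works because the agent is then elsewhere, and a non-agent vertex can be placed anywhere.
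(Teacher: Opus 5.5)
\textbf{There is a genuine gap in property (ii).} Your overall architecture matches the paper's: confine with \trap{3}, realize the exterior--exterior pairs by a Walecki schedule, expose the agent's vertex, and re-trap after its first step onto a new vertex. The problem is that re-trapping the agent ``among flagged vertices'' does not stop it from stepping onto a flagged vertex that has not yet met some unflagged vertex in this period. Such a vertex can only be completed by placing that unflagged vertex next to it while the agent stands on it, i.e.\ by exposing the agent again.

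Your schedule exposes the agent before the other interior vertices are complete, so this can happen. Concretely, in the first period let the agent stay at $1$ throughout Phase~1. Since $F=\{0,1,2\}$, the flanks are always $0$ and $2$, each with one free slot. After about $n/2$ steps, each has therefore met only about half of the $n-3$ exteriors. At the first step of your Phase~2 the agent steps onto an exterior $x$. Now $F=\{0,1,2,x\}$, so $x$ must be flanked by two of $0,1,2$, and all three are still unfinished. The agent steps onto one of them, say $0$, and never leaves. Since $0$ still misses some unflagged $y$, you must put $y$ next to the agent before the period ends, and the agent flags a second new vertex.

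Your closing sentence (``the agent is then elsewhere'') and your Phase~3 (``all other pairs, by the free positions'') both assume the agent never returns to an unfinished vertex, which your trap does not guarantee. The same problem hits $x$ itself if an exterior--exterior pair involving $x$ was broken in Phase~1 and deferred to a ``second Walecki pass''. Your proposed resolution of the main obstacle (showing exteriors ``only in the last step of phase 2'') does not work either, since in one step the agent's vertex has only two neighbours.

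\textbf{The missing idea is an ordering invariant:} every vertex the agent can reach after its first move must already be complete with respect to unflagged vertices. The paper enforces this in three ways.
\begin{enumerate}
\item Phase~1 treats the three-vertex trap block as a single mega-vertex in a Walecki schedule on $n-2$ vertices. All exterior--exterior pairs are then realized no matter how the agent moves inside the block, which also replaces your vague re-embedding and padding step.
\item The single free slot of each boundary interior is used to finish whichever two interior vertices currently do not hold the agent. The agent's vertex is exposed only once it is the \emph{unique} unfinished interior vertex.
\item When the agent steps onto $x$, which is already complete by the first point, $x$ is placed between the two finished interiors. The old vertex $v$ is kept outside the trap block, never adjacent to the agent, while its remaining exterior adjacencies are realized.
\end{enumerate}

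You would also have to redo your length count under this ordering. Under the trap, a non-agent interior gains only one exterior partner per step, and the agent can shuttle among the three interiors to slow progress. A counting argument still fits the period within $2n$, but your ``two new neighbours per step'' estimate no longer applies.
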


\begin{proof}
Let $m=n-3$ be the number of exterior vertices. In every step but one in the period, the adversary uses the \trap{3} procedure to keep the agent in the middle node of the three interior vertices. The other two interior vertices are called the two \emph{boundary} interiors. 
The adversary chooses exactly one time step in the period to offer an unflagged vertex $v$ as a neighbor of the agent; if the agent moves to $v$, the adversary puts $v$ (and the agent) between two visited interiors in the next time step, and does not offer a second unflagged vertex as a neighbor in the same period. The challenge for the adversary is to do this while making sure that all vertex pairs are adjacent at least once during the period, thus guaranteeing that the graph is in \vpd. 

\smallskip
\noindent\textit{Phase 1 ($\lfloor n/2\rfloor -1$ time steps).}
Treat the three interior vertices as a single \emph{mega} vertex $V_m$ and form a ring on $n-2$ vertices
($V_m$ plus the $m$ exterior vertices), and use a Walecki construction to create vertex-permuted rings of $n-2$ vertices for  the next 
$\lceil (n-3)/2\rceil = \lfloor n/2\rfloor -1$
time steps. Notice that every exterior vertex will be adjacent to every other exterior vertex at least once.
In the same time, in each time step $V_m$ is adjacent to two \emph{new} exterior vertices. We now describe the adaptive part of the adversary's strategy. In each such ring of size $n-2$, we expand the vertex $V_m$, and respond to the agent's actions in the vertices that constituted $V_m$ by the \trap{3} procedure. Observe that  adjacencies between $V_m$ and the exterior vertices  are split across the three interior vertices  so that \emph{no} interior vertex meets more than $\lceil (n-3)/2 \rceil$
distinct exteriors by the end of phase~1. 
Thus by the end of phase~1 we have:
(a) all exterior–exterior pairs of vertices have been adjacent; and 
(b) at least $(n-3)$ distinct interior–exterior pairs have been adjacent, with no interior vertex being adjacent to more than half of the exterior vertices. 

\smallskip
\noindent\textit{Phase 2 ($\lfloor n/2\rfloor -1$ steps).}
In this phase, the adversary continues to use the {\em \trap{3}} procedure (Figure~\ref{fig:trapin3}) to keep the agent in the middle of the three interior vertices. However, in each time step, we ensure that each boundary interior vertex is adjacent to a new exterior vertex that it did not see in Phase 1. This is possible since  no interior vertex has been adjacent to  more than $(n-3)/2$ exterior vertices in Phase~1. The other adjacencies between exterior vertices are arbitrarily chosen. After   $\lceil n-3/2\rceil = \lfloor n/2\rfloor -1$ time steps, (at least) another $n-3$ interior–exterior pairs have been adjacent.  

\smallskip
\noindent\textit{Phase 3 (at most $n-3$ time steps).}
There are $3(n-3)$ pairs of interior-exterior vertices in all.  After Phase 2, all but $(n-3)$ of these pairs have already been adjacent during the current period. In Phase 3, in each step, to build the permutation, we first move the agent to the middle of the three interior vertices. Call this middle vertex $v$, and let $u$ and $w$ be the boundary interior vertices. If one of $u$ and $w$  still has not been adjacent to some exterior vertex, then we provide that adjacency (or those adjacencies) now and build the rest of the permutation in an arbitrary way. If both $u$ and $w$  have already been adjacent to all exterior vertices, all missing exterior-interior adjacencies are with the node $v$, where the agent is located. So in the new permutation, we swap the positions of $u$ and $v$, making $v$ a boundary interior, then make $v$ adjacent to one of those (possibly unflagged) exterior vertices $x$, swapping the positions of $u$ and $x$. This provides the only chance for the agent to visit an unflagged vertex in this period. If the agent does not move to $x$, we repeat the procedure described here, creating a new adjacency with $v$ and a new exterior vertex. If the agent does move to $x$, we now put $x$ in between $u$ and $w$ (that is, $x$ is now an interior vertex), and put $v$ in the ring at a position one away from $u$, and in each subsequent step, (a) create two of the remaining adjacencies between $v$ and exterior vertices;  (b) use \trap{3} to keep the agent in the middle vertex among $u, w, x$; (c) arbitrarily create the rest of the permutation. When this process completes, all vertices have been adjacent to all other vertices, and at most one vertex has been flagged in the period.
The total number of steps in all three phases, that is, the length of the period is $4 \lfloor n/2 \rfloor - 4 < 2n$.
\end{proof}

\begin{theorem} \label{thm:lb-vpd-single}
For any deterministic online algorithm, there is an input in $VP(\delta)$ for $\delta \geq 2n$ such that treasure hunt takes at least $\delta(n-3)$ steps, that is, $\tau_1^1 \geq \delta (n-3)$, even for agents with chirality, memory, full visibility and knowledge of node labels.
\end{theorem}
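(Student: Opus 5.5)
The plan is to derive the theorem from Lemma~\ref{lem:lb-vpd-single} by an amortized count over periods, with the adversary's behaviour fixed as follows.

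\emph{Initial segment.} Until the agent has flagged three vertices, the adversary presents any sequence of rings that keeps the instance in \vpd. A concrete choice is to repeat a Walecki sequence, which is legal by Proposition~\ref{lem:delta-threshold} since $\delta \ge 2n > \lceil (n-1)/2\rceil$. The treasure is not committed during this segment. We relabel so that the flagged vertices are $\{0,1,2\}$ and reset the clock to $0$, as in the setup before Lemma~\ref{lem:lb-vpd-single}.

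\emph{Periods.} From then on, time is cut into windows of length $\delta$. Inside each window the adversary runs the period construction of Lemma~\ref{lem:lb-vpd-single}, which uses fewer than $2n \le \delta$ steps. For the remaining steps of the window it keeps applying \trap{3} around the current interior triple and offers no unflagged neighbour. Part (i) of the lemma says every pair of vertices is adjacent within the first $2n$ steps of each window. Hence every aligned $\delta$-window meets the \vpd\ condition, and the whole input lies in $\mathrm{VP}(\delta)$. Part (ii), together with the trap in the padding steps, says the agent flags at most one new vertex per window. This holds even for an agent with memory, chirality, full visibility and node labels, because the adversary adapts to the agent's actual moves.

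\emph{Treasure placement.} Since the algorithm is deterministic, the adversary can simulate it against this adaptive sequence with no treasure anywhere. It then places $T$ at the exterior vertex that is flagged last. If some exterior vertex is never visited, it places $T$ there instead, and the agent never finds it. With $n-3$ exterior vertices and at most one new one per window, the last exterior vertex is reached no earlier than window $n-3$, counting windows from $0$ after the reset. So the agent reaches $T$ only after at least $(n-4)\delta$ steps. To obtain exactly $\delta(n-3)$, the bookkeeping must either count the initial segment (at least one step) and the position inside the final window, or note that the vertex gained in window $r$ is entered only at the one offering step, which comes late in the period. The simulation is consistent because placing $T$ does not change anything the agent sees before it reaches $T$. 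By the remark in the model section, this adaptive construction also yields an oblivious input for deterministic algorithms.

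\emph{Main obstacle.} The hardest point is the interface between consecutive periods. The lemma assumes the agent starts in the middle of a flagged interior triple. After a period in which the agent moved to $x$, the new triple is $\{u,w,x\}$, and the construction must restart from this configuration; I would simply rename the triple. One must also check that the \trap{3} swaps at a window boundary never expose an unflagged vertex. For the additive constant, I would first try charging the pre-reset phase, which takes at least two steps to flag two new vertices. If that falls short, I would claim the weaker but still $\Theta(\delta n)$ bound $\delta(n-4)$.
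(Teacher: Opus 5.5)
Your route is the paper's: apply Lemma~\ref{lem:lb-vpd-single} in every $\delta$-window, rename the flagged triple between periods, and put $T$ at the last exterior vertex to be flagged. The paper's proof only asserts that $n-3$ periods are needed, and hence $\delta(n-3)$ steps. That passes over exactly the off-by-one you noticed: ``at most one new vertex per window'' only shows that the last exterior vertex is entered \emph{during} the $(n-3)$-th window after the reset, i.e.\ at time $(n-4)\delta$ plus the offset at which it is offered.

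Neither of your repairs recovers the missing $\delta$. Charging the pre-reset segment adds only $O(1)$, because an agent can have three flagged vertices after two moves. The lemma also gives no control over \emph{when} inside a window the new vertex is offered. In its construction the offers occur in Phase~3, within the first $4\lfloor n/2\rfloor-4<2n$ steps, and in your layout (construction first, padding afterwards) that is the start of the window. Concretely, consider {\sf Move-to-Unflagged} against your adversary: it sits in the middle of its triple and steps only onto unflagged vertices. It meets $T$ inside the construction part of that last window and arrives before time $(n-4)\delta+2n\le(n-3)\delta$. So as written you prove only your $\delta(n-4)$ fallback.

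The missing ingredient is a separate rule for the last window, which the lemma cannot supply. Let the adversary use Lemma~\ref{lem:lb-vpd-single} only in windows that begin with at least two unflagged vertices. In a window that begins with $T$ as the only unflagged vertex, no trapping is needed, since every other vertex is flagged. The adversary proceeds as follows:
\begin{itemize}
\item in each step, put $T$ between two vertices other than the agent's current vertex, favouring vertices that have not yet met $T$ in this window;
\item cover all pairs of flagged vertices along the way, which fits easily because $\delta\ge 2n$;
\item make $T$ adjacent to the agent's current vertex only in the final step of the window, and only if that adjacency is still missing.
\end{itemize}
The window still realizes every pair, and the agent can enter $T$ only in its last step. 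At most $n-4$ exterior vertices are flagged in the first $n-4$ windows after the reset. Hence the $(n-3)$-th window either begins with $T$ as the only unflagged vertex, or begins with at least two unflagged vertices, in which case the vertex flagged during it is not the last one. Either way $T$ is reached no earlier than time $(n-3)\delta$ after the reset, and the pre-reset steps absorb any off-by-one in the arrival convention.
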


\begin{proof}
We start with 3 visited/flagged vertices $\{0, 1, 2\}$.
By Lemma~\ref{lem:lb-vpd-single}, each period covers all pairs and adds at most one new flagged vertex. Note that the set of 3 interior vertices can change between phases; interior vertices are always flagged, and some subset of the exterior vertices of size at most $i$ are flagged after the $i^{th}$ period.
 Placing the treasure at the last unflagged vertex to be reached ensures that $(n-3)$ periods are needed.
Each period has length $\delta$. Therefore the total time needed is $\delta(n-3)$.
\end{proof}

As a consequence of Theorems~\ref{thm:vpd-single} and \ref{thm:lb-vpd-single}, we have:
 \begin{corollary}
Algorithm \move, in which agents do not have chirality, memory, or knowledge of node labels, and can only see flags at neighboring nodes, is asymptotically optimal in terms of both worst-case search time  for \vpd with $\delta \geq 2n$, even considering agents with chirality, memory, full visibility, and knowledge of node labels. 
\end{corollary}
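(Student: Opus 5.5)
The corollary follows by combining the upper bound of Theorem~\ref{thm:vpd-single} with the lower bound of Theorem~\ref{thm:lb-vpd-single}. Two things need checking: that the two bounds are stated for comparable agent models, and that their ratio is a constant.

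\textbf{Step 1 (upper bound, weak model).} Algorithm \move uses only the current snapshot. The agent looks at its two neighbors, checks whether each is flagged, and either moves to an unflagged one (breaking ties arbitrarily) or waits. So it needs no memory, no chirality and no node labels; it only needs to see flags at distance one. Theorem~\ref{thm:vpd-single} then gives that on every input in \vpd the treasure is found within $\delta(n-1)$ steps. The argument behind that theorem is that within any window of $\delta$ steps the agent is adjacent to every other vertex, hence to some unflagged vertex while one remains, and each move flags a new vertex.

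\textbf{Step 2 (lower bound, strong model).} For $\delta \ge 2n$, Theorem~\ref{thm:lb-vpd-single} gives, for every deterministic online algorithm, an input in \vpd with $\tau_1^1 \ge \delta(n-3)$. This holds even for agents with chirality, memory, full visibility and knowledge of node labels. Stronger agents include the weak ones, so the same bound applies to any algorithm in the weak model, and in particular to \move.

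\textbf{Step 3 (ratio).} The worst-case search time of \move exceeds the best possible worst-case time, even in the strong model, by a factor of at most
\[ \frac{\delta(n-1)}{\delta(n-3)} = \frac{n-1}{n-3}, \]
which is at most $3$ for $n \ge 4$ and tends to $1$. Hence \move is asymptotically optimal, and the worst-case time is $\Theta(\delta n)$.

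The main subtlety is the interpretation of ``optimal'': it refers to the worst-case search time over all algorithms, not to per-instance optimality. One must also respect the range $\delta \ge 2n$, which is needed because the adversary's period construction in Lemma~\ref{lem:lb-vpd-single} requires $4\lfloor n/2\rfloor - 4 < 2n \le \delta$ steps. If the competitive ratio is to be covered as well, an extra argument is needed. The offline optimum on the adversarial instance is at most about $\delta$, since $T$ becomes adjacent to the agent's current vertex within one window. The adversary's input therefore forces a ratio of $\Omega(n)$, while \move has ratio $O(\delta n)$ against an optimum of at least $1$. I would not claim more than the search-time optimality without that additional argument.
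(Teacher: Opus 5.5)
Your proposal is correct and follows the paper's own justification. The paper obtains the corollary directly by pairing the $\delta(n-1)$ upper bound of Theorem~\ref{thm:vpd-single} for the weak agent with the $\delta(n-3)$ lower bound of Theorem~\ref{thm:lb-vpd-single}, which holds for the stronger agent model and therefore also for \move as a special case.

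Your caution about the competitive ratio fits the statement as written. For the record, the paper treats it in a separate theorem and gets $\Omega(\delta n)$ rather than your $\Omega(n)$: it places the treasure where the offline optimum reaches it in one step, while the adversary still makes it the last vertex the online agent visits. That argument requires the agent to lack node labels.
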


\begin{theorem}
The competitive ratio of any deterministic online agent with chirality, memory, and full visibility, but without knowledge of node labels, is $\Omega(\delta n)$.
\end{theorem}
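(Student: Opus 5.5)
The plan is to combine the lower bound of Theorem~\ref{thm:lb-vpd-single} with an upper bound of $O(1)$ on the cost of an optimal offline algorithm on the very same input. The competitive ratio is the online cost divided by the offline cost on a single input. So it suffices to exhibit, for every deterministic online agent, one input in \vpd\ on which the online agent needs $\Omega(\delta n)$ steps while an offline agent that knows the sequence reaches $T$ within a constant number of steps.

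The online half reuses the adversary from Lemma~\ref{lem:lb-vpd-single}. In each period of length $\delta \ge 2n$ the adversary presents rings covering all pairs, lets the agent flag at most one new vertex, and keeps it trapped among the interior vertices by \trap{3}. Since the agent is deterministic, running it against this adaptive adversary fixes one concrete sequence $G_0,G_1,\ldots$. The treasure is placed at the last vertex $T$ to be flagged, which gives $\tau_1^1 \ge \delta(n-3)$ by Theorem~\ref{thm:lb-vpd-single}.

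The offline half bounds $OPT$ on this fixed sequence. The offline agent starts at the same vertex $s$ as the online agent. In the first period, property (i) of the lemma guarantees a step $t_0 \le \delta$ at which $s$ is adjacent to $T$. A path of length $O(\delta)$ is too slow, since it only gives ratio $\Omega(n)$. I would instead modify the adversary's first few steps, using the freedom it has before the agent has flagged three vertices (the phase in which it waits). Concretely, the adversary arranges that the starting vertex $s$ is adjacent in $G_0$ to a vertex $x$ that the online agent does not move to. This choice is possible because the online agent moves deterministically to at most one neighbor. The adversary then designates $x$, or a vertex the offline agent can reach in $O(1)$ steps, as the eventual treasure. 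It keeps $x$ unflagged for the rest of the construction by treating $x$ as the exterior vertex that is never offered to the agent while unflagged, in the way the last unflagged vertex is handled in Lemma~\ref{lem:lb-vpd-single}. Then $OPT \le 1$, or $O(1)$, so the ratio is $\Omega(\delta n)/O(1)=\Omega(\delta n)$.

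The main obstacle is reconciling these two halves. The treasure must be committed to a vertex (like $x$) that the offline agent can reach immediately, yet the online agent must still have to flag about $n-3$ other vertices before reaching it. This is exactly where the absence of node labels is used. The online agent cannot distinguish $x$ from the other unflagged exterior vertices, which are symmetric from its point of view. Hence the adversary can consistently postpone offering $x$ until the last period by relabeling: whenever the agent is offered an unflagged vertex, the adversary offers one other than $x$. I will need to check that the period construction of Lemma~\ref{lem:lb-vpd-single} still covers all pairs when $x$ is one specific held-back exterior vertex. This should follow because Phase~3 chooses which exterior vertex to offer freely, so it can always pick some exterior vertex other than $x$ while any remain.
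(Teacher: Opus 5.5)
Your plan matches the paper's. You put the treasure at a vertex adjacent to the start vertex in $G_0$, so the offline optimum needs one step. You then use the agent's inability to tell unflagged vertices apart to let the adversary of Lemma~\ref{lem:lb-vpd-single} make that vertex the last one flagged. Choosing $x$ as the neighbor of $s$ in $G_0$ that the deterministic agent does not step to is a clean way to make precise the paper's claim that ``the treasure at $n-1$ is one step away.''

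\textbf{The step that fails as written.} The problem is your rule for holding $x$ back: that $x$ is \emph{never} offered to the agent while unflagged, with Phase~3 always choosing some other exterior vertex. This contradicts property~(i) of the lemma. In every period $x$ must become adjacent to every interior vertex, including the one the agent occupies. If the agent simply stays at that vertex for the whole period, $x$ must at some step appear as its unflagged neighbor. Phase~3 only lets you choose the \emph{order} in which the missing adjacencies of the agent's vertex $v$ are realized; it does not let you skip the pair $\{v,x\}$.

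\textbf{The repair.} It uses determinism together with anonymity, which you already invoke. Before committing to a step, the adversary fixes the unlabeled flagged configuration and simulates the agent on it.
\begin{itemize}
\item If the agent would stay, $x$ may be placed next to it, which realizes $\{v,x\}$.
\item If the agent would step onto the offered unflagged neighbor, the adversary puts some other unflagged exterior vertex $z\neq x$ in that slot. The agent's view is identical, so it moves to $z$. The lemma's Phase~3 procedure, with $z$ in the role of the offered vertex, then brings $z$ into the interior. It realizes the remaining adjacencies of $v$, including $\{v,x\}$, while the agent is trapped among the other interior vertices.
\end{itemize}
The same rule must also govern the initial steps before three vertices are flagged. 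This works as long as some unflagged exterior vertex other than $x$ remains. Hence the online agent needs at least $n-4$ full periods, i.e.\ cost $\Omega(\delta n)$, against an offline cost of $1$.
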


\begin{proof}
To see that the competitive ratio of any deterministic online agent is $\Omega(\delta n)$, we claim that on the input described in the proof of Lemma~\ref{lem:lb-vpd-single}, it is possible for an optimal offline algorithm to reach the treasure in a single step. Recall that in the construction of the bad input for the online agent, we assume that the first three vertices visited by the agent were 0, 1, and 2. If the treasure was at $n-1$, then the optimal offline algorithm could reach it in one step, while the adversary can ensure that vertex $n-1$ is the very last node visited by the online agent.
\end{proof}

\subsubsection{Agents that cannot flag visited vertices}

In the \vpd~model we have considered so far, agents can distinguish between vertices that have already been visited and those that have not been visited, by using flags.
We now consider the setting
without flags. In this setting, an agent can still observe its current neighbors
and choose whether to move or stay, but it cannot tell whether a neighboring vertex has been visited before.
We first show that without flags, no deterministic single-agent algorithm can solve treasure hunt. In contrast, we give a randomized algorithm that can solve the problem in expected $O(\delta^2)$ time, by choosing to wait with a positive probability in every step.

\begin{theorem}
\label{Theorem_No_flag_No_Solution}
For \(\delta\geq n\), no deterministic single-agent algorithm can solve treasure hunt in \(VP(\delta)\) without flags, even for agents with memory, chirality, and full visibility, but no knowledge of node labels.
\end{theorem}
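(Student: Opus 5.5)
The plan is an indistinguishability argument. Without flags and without node labels, the only thing the agent observes in each step is the anonymous local structure: a ring with itself and two neighbors, or the whole unlabeled ring under full visibility. Every snapshot is therefore isomorphic to every other one. With memory and chirality, the agent's entire behavior is a fixed sequence of decisions, each in $\{\mathrm{stay},cw,ccw\}$. The adversary knows the deterministic algorithm, so it can simulate this sequence in advance and build an input in $VP(\delta)$ in which the agent never reaches some fixed vertex $T$.

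The construction reuses the trapping idea of \trap{3}, now under the $VP(\delta)$ constraint. Take three vertices $\{0,1,2\}$ and let the agent start at vertex $1$. Choose the treasure at some vertex $T\notin\{0,1,2\}$; this needs $n\ge 4$. Partition time into windows of length $\delta\ge n$.
\begin{itemize}
\item At every step, the adversary computes the agent's next move, which it can do because the move does not depend on labels.
\item It then arranges $G_{t+1}$ so that the agent sits in the middle of a block of three vertices not containing $T$, with both neighbors never equal to $T$.
\item If the agent moves, the adversary swaps positions as in \trap{3}. Since the agent cannot see labels, it cannot detect the swap.
\end{itemize}

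The key requirement is that in each window every pair, including every pair involving $T$, is adjacent at least once, while $T$ is never adjacent to the agent's current vertex. Unlike Lemma~\ref{lem:lb-vpd-single}, here the agent's vertex must eventually become adjacent to $T$. To allow this, the set of trap vertices must rotate. The idea is that without flags the agent cannot distinguish a fresh vertex from an old one, so when it moves into a neighbor $x$, the adversary may choose $x$ to be any vertex other than $T$. The adversary can then steer which vertex the agent occupies and when. Concretely, within a window the adversary realizes all adjacencies except those between $T$ and the agent's current vertex using a Walecki-style schedule on $n-1$ vertices, placed so that $T$ lies away from the agent. The pairs $\{T,v\}$ with $v$ the agent's location are realized at steps when the agent is located elsewhere. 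If the agent stays put at a vertex $v$ throughout, the adversary forces adjacency of $v$ with $T$ by placing $T$ next to $v$ but on the side \emph{not} chosen; this requires knowing the agent's choice in advance.

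A cleaner approach is available: because the algorithm is deterministic and oblivious to labels, its output is literally a fixed sequence $d_0,d_1,\dots$. The adversary fixes a valid periodic $VP(\delta)$ schedule on the other $n-1$ vertices, and at each step places $T$ adjacent only to vertices the agent is not currently at, or, when the agent is at $v$ and $T$ must meet $v$, places $T$ on the side opposite to $d_t$ (if $d_t=\mathrm{stay}$, either side works). Relabeling is free, so the adversary can choose a labeling consistent with these constraints step by step, using the slack $\delta\ge n$: it needs at most about $(n-1)/2$ steps for the Walecki cover and has room left for the extra $T$-adjacencies.

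I expect the main obstacle to be verifying that every pair still meets within each window while $T$ is never entered. The tool for the delicate case, meeting the agent's vertex, is the \emph{opposite side of the chosen move} trick, which is exactly where determinism is used. Randomized algorithms escape this because the adversary cannot predict the side.
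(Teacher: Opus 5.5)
Your final (``cleaner'') argument is correct, but it takes a genuinely different route from the paper's.

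\textbf{The paper's route.} The paper \emph{confines} the agent to two vertices $u,v$. It treats $\{u,v\}$ as a mega-vertex in a Walecki schedule on $n-1$ vertices and always puts the partner vertex on the side of the scheduled move $d_i$. It then spends a second block of about $\lceil (n-2)/2\rceil$ steps realizing the exterior adjacencies that $u$ or $v$ missed. This two-block window is why it needs $\delta\ge n$.

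\textbf{Your route.} You only keep the agent off $T$, by putting $T$ on the side opposite to $d_t$ whenever $T$ is adjacent to the agent's current vertex. The obstacle you anticipate, checking that every pair still meets, disappears once this is said precisely. Choosing the side of $T$ is just choosing between a snapshot $H_t$ and its mirror image, and reflecting a ring does not change its edge set. So take any fixed sequence $H_0,H_1,\dots$ in $VP(\delta)$ on all $n$ vertices, e.g.\ the periodic Walecki sequence. Let $G_t$ be $H_t$ or its reflection, chosen so that the $d_t$-neighbour of the agent's current vertex is not $T$. Induction shows the agent never stands on $T$, and each $G_t$ has exactly the adjacencies of $H_t$.

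\textbf{Tightening the write-up.} There is no need for a schedule on the other $n-1$ vertices plus ``extra $T$-adjacencies,'' nor for the opening three-vertex trap sketch. The phrase ``relabeling is free'' should be replaced by the reflection statement above. An arbitrary per-step relabeling of vertices would destroy the pair-coverage guarantee, whereas reflecting or rotating positions preserves it.

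\textbf{What each route buys.} Yours is shorter and works for every $\delta\ge\lceil (n-1)/2\rceil$, i.e.\ whenever $VP(\delta)$ is non-empty, so the hypothesis $\delta\ge n$ is unnecessary. The paper's gives a stronger structural conclusion: the agent only ever visits two vertices, regardless of where outside them the treasure sits. The price is a more specific schedule and the larger $\delta$.
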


\begin{proof}

Consider an arbitrary deterministic algorithm for a single agent without flags. We allow the agent to have memory, but the vertices are anonymous, as far as the agents are concerned.  In this case, any deterministic algorithm can be specified as a sequence $(i, d_i)$ with  \(d_i\in\{-1,0,1\}\),  where \(-1,0,1\) respectively stand for move counterclockwise, stay, and move clockwise. The adversary knows this sequence; we give  a  strategy for the adversary for a window of length $\delta \geq n$. The adversary keeps the agent confined to two vertices \(u\) and \(v\). The treasure is placed at an arbitrary vertex outside \(\{u,v\}\).

 We treat the pair of vertices $u, v$ as a mega-vertex $V_m$; the adversary uses a Walecki construction on a graph of $n-1$ vertices in $\left\lceil \frac{n-2}{2} \right\rceil$ time steps. This ensures that all vertices except $u$ and $v$ are adjacent to each other in some step, and also, adjacent to $V_m$ in some step. 
 
 Now we describe what happens inside $V_m$ during these steps. Suppose after step $i-1$, the agent is at $u$. Then if $d_i=0$, in the graph $G_i$, the adversary keeps the relative positions of $u$ and $v$ the same as in the previous step; if $d_i=1$, the adversary makes $v$ the clockwise neighbor of $u$; and if $d_i=-1$, the adversary makes $v$ the counterclockwise neighbor of $u$. An identical strategy is used if the agent is in $v$ after step $i-1$, exchanging the roles of $u$ and $v$. In this way, the adversary ensures that if the agent moves in step $i$, it can only move from $u$ to $v$ or from $v$ to $u$. 

 At the same time, the permutation built by the Walecki construction ensures that in every step,
 at least  $\lfloor \frac{n-2}{2} \rfloor$ of the  remaining $n-2$ vertices have been adjacent to $u$ (but not to $v$) and the remaining have been adjacent to $v$ (but not to $u$).
In the next $\lceil \frac{n-2}{2} \rceil$ steps, the adversary creates these missing adjacencies for $u$ and $v$,  while using an identical strategy as in the first $\lceil \frac{n-2}{2} \rceil$ steps to keep the agent in the vertices $u$ and $v$. The adjacencies between the $n-2$ vertices other than $u$ and $v$ can be arbitrarily set.

Therefore, in at most \(2\left\lceil \frac{n-2}{2}\right\rceil \le n\)
steps, every pair of vertices has appeared as an edge at least once. Repeating
this construction in every window gives a dynamic graph in \vpd~ for every
\(\delta\ge n\). However, throughout the whole execution the agent visits only
the two vertices \(u\) and \(v\). The agent's memory may affect the sequence of
actions \(d_i\), but it does not allow the agent to distinguish \(u\) and \(v\)
from previously unseen anonymous vertices. Since the treasure is placed outside
\(\{u,v\}\), the agent never finds the treasure. Hence, no deterministic online
algorithm can solve treasure hunt in \vpd~ without flags,
even if the agent has memory.
\end{proof}


In  Theorem~\ref{Theorem_No_flag_No_Solution},  the adversary uses its advance knowledge of the agent's deterministic choices in every step to keep the agent trapped between two vertices and still create a graph in \vpd.  However, for a randomized algorithm, since the adversary does not know the agent's random choices in advance, it is possible to succeed in finding the treasure. 
Consider the following simple randomized algorithm. At each time step, the
agent stays at its current vertex with probability $P_s$. Otherwise, it moves to
one of its two neighbors, choosing the clockwise and counterclockwise neighbor
with equal probabilities
$P_\ell=P_r=\frac{1}{2}(1-P_s).$

By choosing the right value of $P_s$, we obtain the following result:

\begin{theorem}\label{thm:opt-probs-d0-nomarks}
There is a randomized algorithm for treasure hunt by one agent in $VP(\delta)$ with $\delta \in \Omega(n)$, which achieves $\mathbb{E}[\tau_1^1]=O(\delta^2)$.
\end{theorem}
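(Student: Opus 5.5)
Against an adversary that does not know the agent's coin flips, we analyze the lazy random walk described just above. We take $P_s$ to be a constant, say $P_s = 1/2$, or possibly $P_s = 1-1/\delta$ if the analysis demands it. The plan is to lower-bound, for each window of $\delta$ steps, the probability that the agent reaches $T$ during that window. If every window succeeds with probability at least $p$, regardless of history and of the adversary's choices, then the number of windows needed is dominated by a geometric random variable. This gives $\mathbb{E}[\tau_1^1] \le \delta/p$, so it suffices to show $p = \Omega(1/\delta)$.

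\textbf{Step 1.} Fix a window $W$ starting at time $t_0$. By the definition of \vpd, there is a step $s \in W$ in which $T$ is adjacent to every other vertex at least once during $W$. The difficulty is that the agent's position at time $s$ is random, and the adversary chooses the window's rings adaptively. So we need a bound that holds for the worst adjacency pattern.

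\textbf{Step 2.} Consider the strategy ``stay until the unique moment when the current vertex is adjacent to $T$, then move toward $T$.'' Let $v$ be the agent's vertex at time $t_0$. The pair $\{v,T\}$ is adjacent at some step $s_v \in W$. The event that the agent stays put for every step before $s_v$ and then moves to $T$ at $s_v$ has probability at least $P_s^{\delta}\cdot \frac{1-P_s}{2}$. This uses only the guarantee for the single pair $\{v,T\}$, which holds no matter how the adversary acts. It does not depend on knowing $s_v$, because the random walk simply happens to realize this sequence of actions.

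\textbf{Step 3.} Choose $P_s = 1 - 1/\delta$. Then $P_s^{\delta} \ge 1/4$ for $\delta \ge 2$, and $(1-P_s)/2 = 1/(2\delta)$, so $p \ge 1/(8\delta)$. Hence $\mathbb{E}[\tau_1^1] \le 8\delta^2 + O(\delta)$, which is $O(\delta^2)$. The hypothesis $\delta \in \Omega(n)$ is only needed to make the class non-empty, by Proposition~\ref{lem:delta-threshold}, and to absorb constants.

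\textbf{Main obstacle.} The key point is making the per-window bound rigorous against an adaptive adversary. For an oblivious adversary, the rings are fixed in advance and the argument above is immediate. For an adaptive adversary that sees past moves but not the current coin, the pair $\{v,T\}$ must still be adjacent at some step of $W$. As long as the agent has stayed at $v$, however, the adversary may pick $s_v$ depending on the agent's behavior. This is handled by conditioning step by step: each step, the agent independently stays with probability $P_s$ and moves toward a chosen side with probability $(1-P_s)/2$. Whatever step the adversary makes $T$ adjacent to $v$, the run ``all stays until then, then the correct move'' has probability at least $P_s^{\delta}(1-P_s)/2$.

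One further technicality concerns the window boundaries. \vpd only guarantees adjacency within aligned windows. The argument therefore uses aligned windows, conditioning at each window start on the history so far, so that the successive windows give independent lower bounds.
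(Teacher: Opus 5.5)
Your proposal is correct and follows the paper's proof: a lazy walk with $P_s = 1-1/\delta$, a per-window lower bound from the single path ``stay at $v$ until the guaranteed $\{v,T\}$ adjacency appears, then step toward $T$'' of probability at least $P_s^{\delta}(1-P_s)/2=\Omega(1/\delta)$, and a geometric bound over aligned windows giving $\mathbb{E}[\tau_1^1]=O(\delta^2)$. The differences are cosmetic: your constant $8$ comes from $(1-1/\delta)^\delta\ge 1/4$ rather than the paper's $2e$, and your conditioning argument for adaptive adversaries is more explicit; also, your Step~1 sentence is garbled, but Step~2 correctly uses only the pairwise guarantee for $\{v,T\}$.
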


\begin{proof}
Consider a particular window of $\delta$ steps. Let $u$ be the location of the agent at the first time step of the window, and let $T$ be the location of the treasure. We first find the probability that the agent reaches the treasure within the window of $\delta$ time steps.  Since the input is in $\mathbf{VP}(\delta)$, the edge $(u,T)$ must appear in this window.  Let  $t^*$ be the step chosen by the adversary to present the edge $(u, T)$.   The agent reaches $T$ at time $t^*$ if it 
(a) remains at $u$ for the first $t^*-1$ steps and
(b) moves at step $t^*$ toward $T$. 

Let $p_{\mathrm{win}}(P_s)$ denote the probability that the agent reaches $T$ in a time window of $\delta$ steps. Then 
\[
p_{\mathrm{win}}(P_s)\ \ge\ (P_s)^{t^*-1} \frac{(1-P_s)}{2}\ge\ (P_s)^{\delta-1} \frac{(1-P_s)}{2}\,>\,0.
\]
This bound is robust against an adaptive offline adversary: the adversary
chooses $t^*$, but cannot make the probability above vanish when $0 < P_s < 1$.

The lower bound on the probability is maximized when $P_s=(\delta-1)/\delta$. 
Substitute into the lower bound on $p_{\mathrm{win}}(P_s)$ and use
$\big(1-\tfrac{1}{\delta}\big)^{\delta-1}\!\to e^{-1}$, we obtain 
\[
p_{\mathrm{win}}^* \ \ge\ \frac{1}{2}\Big(1-\frac{1}{\delta}\Big)^{\delta-1}\cdot\frac{1}{\delta}
\ \approx\ \frac{1}{2e\,\delta}.
\]
Consequently, the expected number of windows to reach the treasure is at most $1/p_{\mathrm{win}}^*$,
and the expected number of steps  satisfies
\[
\mathbb{E}[\tau_1^1]\ \le\ \frac{\delta}{p_{\mathrm{win}}^*}
\ \leq  2e\,\delta^2.
\]

\end{proof}



Theorem~\ref{thm:opt-probs-d0-nomarks} shows that treasure hunt can be solved in expected $O(n^2)$ time  in a dynamic graph in \vpd~(provided $\delta = \Theta(n)$) by a randomized algorithm even when the agent does not  have the ability to flag nodes while visiting.

Observe that our randomized strategy relies on a positive probability 
of waiting. If $P_s=0$, then the lower bound on $p_{\mathrm{win}}$ given above
is zero. In fact, we can show that our \trap{3} strategy can be adapted to foil a  {\em restless} randomized agent, that is, an agent that always moves in every step; we omit the details. 

\begin{remark}[Unknown \(n\) and \(\delta\)]
In this remark, we relax the memoryless assumption and allow the
agent to maintain a counter of the number of steps since the start
of its execution. The randomized algorithm above does not require knowledge of \(n\). The assumption that the agent knows \(\delta\) can also be removed by using the
standard doubling technique; see, for example, \cite{LuoSchapire2014}.
Let \(D_i=2^{i+1}\), for \(i=0,1,2,\ldots\). In phase \(i\), the agent uses
\[
 P_s=1-\frac{1}{D_i}
 \qquad\text{and}\qquad
 P_\ell=P_r=\frac{1}{2D_i},
\]
and follows this strategy for \(\lceil 16eD_i^2\rceil\) steps. If the
treasure has not been found, the agent proceeds to phase \(i+1\), thereby
doubling its current estimate of \(\delta\).

Let \(i^*\) be the first phase for which \(D_{i^*}\geq\delta\). Since every
window of \(D_{i^*}\) steps contains a window of \(\delta\) steps, an input
in \(\mathbf{VP}(\delta)\) is also an input in
\(\mathbf{VP}(D_{i^*})\). Hence, by the proof of
Theorem~\ref{thm:opt-probs-d0-nomarks}, if this phase were continued
indefinitely, its expected search time would be at most
\(2eD_{i^*}^2\). It follows from Markov's inequality that the probability
of not finding the treasure during the phase is at most \(1/8\). The same
conclusion holds for every subsequent phase.

The total number of steps spent before phase \(i^*\) is
\(O(D_{i^*}^2)\). Moreover, the lengths of the subsequent phases increase
by a factor of four, whereas the probability of reaching each subsequent
phase decreases by a factor of at least eight. Therefore, their expected
total length is also \(O(D_{i^*}^2)=O(\delta^2)\). Indeed, for every \(m\geq 1\), the probability that the agent fails in all
of the first \(m\) phases starting from phase \(i^*\) is at most $ \left(\frac{1}{8}\right)^m.
$
Letting \(m\) tend to infinity, the probability that the agent never finds
the treasure is therefore zero. Consequently, the
treasure is found with probability one and in expected \(O(\delta^2)\)
steps, even when the agent initially knows neither \(n\) nor \(\delta\).
\end{remark}

\subsection{Treasure hunt in \rvp} \label{sec:rvp}

Next, we consider treasure hunt in \rvp, a random-order arrival model, where the permutation of vertices is chosen uniformly at random at each time step. We show that the agent can complete the search significantly faster in this setting. We consider the worst-case search times under both an oblivious adversary and an adaptive adversary. The oblivious adversary must choose the location of the treasure without knowledge of the input sequence of random graphs, while the adaptive adversary has knowledge of the input sequence. We show that  a simple search algorithm in which the agent simply moves to its clockwise neighbor is stochastically equivalent to a random walk on the complete graph $K_n$. Note that the agent does not need to flag vertices. Let $\tau(G, T)$ be the search time of the algorithm where $G$ is a dynamic graph  in \rvp and $T \in V(G)$ and $T$ is chosen by the oblivious adversary, that is, before the search algorithm starts executing. Let $T(G)$ be a treasure location chosen by an adaptive adversary that has knowledge of \cal{G}. Note that since $G$ is a random graph, $T$ is a random variable.

\begin{theorem}

For every $T \in V(G)$, and $G$ in  \rvp,
  \[
        \mathbb{E}[\tau(G, T)]
        =
        \Theta(n). \]

For every $G$ in \rvp, the adversary can choose $T(G)$ such that
 \[
        \mathbb{E}[\tau(G, T(G))]
        =
        \Theta(n \log n).
    \]
    
    \end{theorem}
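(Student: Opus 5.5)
We analyze the algorithm in which the agent always moves to its clockwise neighbor. It needs no flags and no memory.

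The first step is the reduction to a random walk on $K_n$. At each step $t$, the ring $G_t$ is an arrangement of a uniformly random permutation, drawn independently of the past. We will check that for any fixed current vertex $u$, the clockwise neighbor of $u$ in $G_t$ is uniform over $V\setminus\{u\}$. This holds by symmetry: relabelling the other vertices by any permutation fixing $u$ preserves the uniform distribution over rings. Because the permutations at different steps are independent, the agent's trajectory is then exactly a simple random walk on $K_n$. More precisely, conditioned on the history, the next vertex is uniform on the $n-1$ other vertices.

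For the oblivious adversary, $T$ is fixed before the randomness is drawn. The hitting time of $T$ from $u\neq T$ is then geometric with success probability $1/(n-1)$ per step. This gives $\mathbb{E}[\tau(G,T)]=n-1=\Theta(n)$, uniformly in $T$, and $\tau=0$ if the walk starts at $T$. If one wants the lower bound for every algorithm, not just this one, we argue separately. Any online agent at vertex $u$ can only reach one of its two current neighbors. Each of these equals $T$ with probability at most $2/(n-1)$, independently of the past given the agent's position. A stopping-time argument then gives expected time at least $(n-1)/2$.

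For the adaptive adversary, $T(G)$ may depend on the whole random sequence. The adversary picks the vertex visited last, so $\tau(G,T(G))$ equals the cover time of the trajectory. For the random walk on $K_n$ this is the coupon collector problem. After $j$ distinct vertices have been visited, the waiting time for a new one is geometric with mean $(n-1)/(n-j)$. Summing over $j$ gives $(n-1)H_{n-1}=\Theta(n\log n)$ for the upper bound.

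For a lower bound valid against every online algorithm, we again observe that in each step the agent sees only two neighbors. These are uniform among the other vertices regardless of its strategy. Hence when $m$ vertices remain unvisited, the probability of visiting a new vertex in a step is at most $2m/(n-1)$. Summing the resulting geometric lower bounds gives $\Omega(n\log n)$. The adversary naming the last-covered vertex realizes this bound.

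The main obstacle I expect is making the "uniform clockwise neighbor, independent of the past" claim rigorous. Care is needed with the agent's private orientation and port labelling: we should assume the labelling, or the agent's $cw$ choice, does not correlate with the random permutation. The cleanest route is to condition on $u$ and apply the symmetry argument above to whichever port the agent picks. After that, the remaining calculations are standard geometric-sum and coupon-collector estimates.
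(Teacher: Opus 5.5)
Your proposal is correct and follows the paper's route: the always-move-clockwise agent is exactly a simple random walk on $K_n$, so the oblivious case is the hitting time (geometric with parameter $1/(n-1)$, mean $n-1$), and the adaptive case is the coupon-collector cover time $(n-1)H_{n-1}=\Theta(n\log n)$. Your extra lower bounds for arbitrary online agents, and your remark that the $cw$ port must not be correlated with the random permutation, go beyond the paper: the paper analyzes only this algorithm and tacitly assumes that independence. Note that having the adversary name the last-covered vertex for a general agent requires its trajectory to be determined by $G$, i.e., a deterministic agent.
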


    \begin{proof}
Fix a time step $t$ and let the agent be at vertex $i$ at time $t$. For any $ j \neq i$, the probability that $j$ is the clockwise neighbor of $i$ in a permutation chosen uniformly at random is $\frac{1}{n-1}$ and therefore the probability that the agent moves to $j$ is $\frac{1}{n-1}$. Since this holds for any $j \neq i$, the search procedure we described is identical to a simple random walk on a clique. 

It follows that for any treasure location picked by an oblivious adversary, the expected time to reach it is at most the hitting time in a clique, and is therefore $\Theta(n)$. On the other hand, an adaptive adversary that knows the sequence of random graphs in advance can ensure that the treasure is located at the last vertex visited by the agent. Hence the expected time to reach the treasure in this case is the cover time in a clique, which is $\Theta(n \log n)$.

    \end{proof}

\section{Search by $k \geq 2$ agents}
\label{sec:vpd-multiple}

In this section, we study treasure hunt with $k \geq 2$ agents in $VP(\delta)$. We first give
an online algorithm with search time $O(n\delta/k)$, showing that $k$ agents
obtain a linear speedup up to constant factors. We then prove a matching lower
bound for the stated range of $\delta$.

\subsection{Upper bound for $k$ agents in $VP(\delta)$}

 We now present an online algorithm \movek for \(k\) agents that finds the treasure in \(O(n\delta/k)\) time steps. The \(k\) agents may initially be located at arbitrary vertices. Agents move only to unflagged neighbors. Using the fact that access to ports is by mutual exclusion \cite{di2020distributed}, we can ensure that exactly one of the agents situated at a vertex move to each of its unflagged neighbors. The pseudocode is given in Algorithm \ref{alg:port-move-unflagged}
 .

\newcommand{\IndentedState}[1]{%
  \State \hspace{\algorithmicindent}%
  \parbox[t]{\dimexpr\linewidth-\algorithmicindent\relax}{#1}%
}
\begin{algorithm}[H]
\caption{\(\textsc{Generalized-Move-to-Unflagged}\)}
\label{alg:port-move-unflagged}
\begin{algorithmic}[1]

\State \textbf{Look:}
\IndentedState{Observe the two neighboring vertices and whether each of them is flagged.}
\IndentedState{Let \(P\) be the set of incident ports leading to unflagged neighbours.}

\State \textbf{Compute:}
\IndentedState{If \(P=\emptyset\), do not request any port.}
\IndentedState{Otherwise, order the ports of \(P\) according to the agent's private local orientation.}
\IndentedState{Request the first port in this order.}
\IndentedState{If the request fails and \(P\) contains a second port, request the second port.}

\State \textbf{Move:}
\IndentedState{If either of the port requests succeeded, move through the acquired port and flag the vertex reached.}
\IndentedState{Otherwise, stay at the current vertex.}

\end{algorithmic}
\end{algorithm}

\begin{theorem}
For $\delta \geq \lceil \frac{n-1}{2} \rceil$, \movek is an  online algorithm for treasure hunt with $k$ agents in \vpd that achieves worst-case search time $\tau^k_1 \le \delta\lceil \frac{2n}{k}\rceil$.

\end{theorem}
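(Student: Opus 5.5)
We prove the bound by a potential argument on the number of flagged vertices, measured window by window. Partition time into the windows of $\delta$ consecutive steps given by the definition of \vpd. We show that as long as the treasure has not been found, every window with at least $k$ unflagged vertices flags at least $\lceil k/2\rceil$ new vertices. Since there are at most $n$ vertices to flag, this gives at most $n/\lceil k/2\rceil \le \lceil 2n/k\rceil$ windows, hence at most $\delta\lceil 2n/k\rceil$ steps. (We assume that the vertices initially occupied by agents are flagged at time $0$; otherwise they only add to the count of flagged vertices.) The treasure vertex stays unflagged until an agent reaches it, so bounding the time until all vertices are flagged bounds $\tau_1^k$.

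\textbf{Progress in one window.} Fix a window $W$ and let $U$ be the set of vertices unflagged at its start. Suppose no agent moves during some step $s \in W$. Then at step $s$ no agent has an unflagged neighbour, since otherwise $P\neq\emptyset$ for that agent and the mutual-exclusion port arbitration guarantees that some agent acquires the port and moves. This observation alone is too weak; instead we argue per agent.

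Say an agent $a$ is \emph{blocked} during $W$ if it never moves in $W$. Then $a$ sits at a fixed vertex $x_a$ throughout $W$. By the \vpd property, $x_a$ is adjacent to every vertex of $U$ at some step of $W$. At such a step $s$, some $y\in U$ is a neighbour of $x_a$. Either $y$ is unflagged at $s$, in which case $a$ requests the corresponding port; if $a$ loses the arbitration, another agent at $x_a$ moves to $y$ and flags it. Or $y$ was already flagged earlier in $W$. In every case some vertex of $U$ becomes flagged during $W$.

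Now count. Every successful move flags a new vertex, and distinct moves in the same step go to distinct vertices: a port is granted to a single agent, and the two ports of a vertex lead to distinct neighbours. We must also handle two agents at different vertices moving into the same unflagged vertex in the same step. Each vertex has only two neighbours, so at most two moves can collide at one target. This collision factor of $2$ is precisely why the bound has $2n$ rather than $n$. We conclude that the number of newly flagged vertices in $W$ is at least half the number of successful moves in $W$.

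\textbf{The main obstacle.} The hardest step is to show that in each window either at least $k$ successful moves occur or $U$ is exhausted. The plan is to charge each agent at least one move per window, or else show that $U$ was fully flagged by others.

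Take an agent that makes no move in $W$, sitting at $x_a$. At the step when $x_a$ is adjacent to the treasure $T\in U$, the vertex $T$ must already be flagged, or some agent at $x_a$ moves there. If an agent moves there, the treasure is found. If $T$ is already flagged, it was visited, so the treasure was already found. Therefore, if the treasure has not been found, every agent moves at least once in every window.

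Moves by distinct agents in a window might coincide in their target only within a single step, and at most two agents can collide on one target. So each window flags at least $\lceil k/2\rceil$ new vertices. This gives the claimed bound $\delta\lceil 2n/k\rceil$. The delicate points to verify carefully are the port-arbitration semantics when several agents share a vertex, and the treatment of the final, partial window.
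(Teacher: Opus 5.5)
Your proof is correct and is essentially the paper's argument. It uses the same windows of length $\delta$ and the same key observation: an agent that stays put for a whole window has every vertex as a neighbour at some step, so port arbitration forces that vertex to be visited. It also uses the same at-most-two-collisions count to get $\lceil k/2\rceil$ new flags per window. The only difference is that you apply the stationary-agent observation to the treasure alone, so ``treasure not yet found'' forces every agent to move. This removes the paper's case split on $n_i$ versus $\lceil k/2\rceil$, but it bounds only the search time rather than the time to flag all vertices. One small fix: count the final window explicitly, which gives at most $\lceil n/\lceil k/2\rceil\rceil\le\lceil 2n/k\rceil$ windows rather than $n/\lceil k/2\rceil$.
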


\begin{proof}
We show that \movek achieves this bound. 
We divide time into windows of length $\delta$. Let $W_i$ be the $i^{th}$ window of $\delta$ steps. Let $n_i$ be the number of unflagged vertices at the start of window $W_i$. Observe that in \movek, an agent moves {\em only} to an unflagged node, and at most 2 agents move to the same unflagged node in the same time step  Indeed, every unflagged node has degree two in the current ring, and at most one agent can acquire each of the two ports leading to it.  Therefore, the number of new nodes flagged in a window is at least half the number of agents that moved in that window. 

We claim that at least $min \{ n_i, \lceil k/2 \rceil \}$ nodes will be flagged during window $W_i$. First consider the case when $n_i \geq \lceil k/2 \rceil$. If every agent moves at least once during $W_i$, then at least $\lceil k/2 \rceil$ nodes were flagged, and the claim is proved. Otherwise, there exists an agent $A$ that remains at the same node $u$ during the whole window $W_i$. 
Since all nodes, and in particular, all unflagged nodes, become neighbors of $u$ during $W_i$, the only reason $A$ did not move is that another agent at \(u\) acquired the corresponding port and moved to an unflagged node. In other words, all $n_i$ unflagged nodes were visited during $W_i$, and the claim holds. 

Now consider the case when $n_i< \lceil k/2 \rceil$. Then, it cannot be that all $k$ agents moved during window $W_i$, and there must be some agent $A$ that did not move. As argued above, this implies that all 
$n_i = min \{ n_i, \lceil k/2 \rceil \}$ unflagged nodes were visited by agents during $W_i$. 

To conclude the proof, we observe that the number of windows of length $\delta$ needed for all nodes to be visited is at most  $\lceil n/ \lceil k/2  \rceil  \rceil \leq \lceil 2n/k \rceil $ and therefore treasure hunt is completed in $\delta \lceil 2n  /k \rceil $ steps. 
\end{proof}


\subsection{Lower bound for $k$ agents in $VP(\delta)$}

The lower bound argument in Lemma~\ref{lem:lb-vpd-single} and Theorem~\ref{thm:lb-vpd-single} can be generalized to $k$ agents, provided $k+2 \leq \frac{n}{2}$.  We first run the online algorithm as long as it takes for the $k$ agents to be spread out over a block of $k+2$ nodes for the first time. We ignore these initial steps. Denote by $I$ (the interior nodes) the block of nodes of length $k+2$ containing the agents. Call the remaining nodes $E$ (the exterior nodes).  The treasure will be placed in a suitable node in $E$, so the agents have to spread out in a distance $k$ block at some point, otherwise they cannot find the treasure.

\begin{lemma}
\label{lem:lb-vpd-multiple}
In every period of length $(k+1)(n-k-2)$, with $1\le k \le \frac{n}{2}-2$, the adversary can present a sequence of vertex-permuted rings in which
(i) every pair of vertices is adjacent at least once, and
(ii) the $k$ agents can collectively visit and flag at most $k$ unflagged vertices, even if they have memory, chirality, full visibility and knowledge of node labels.
\end{lemma}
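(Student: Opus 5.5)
The plan is to generalize the three-phase construction of Lemma~\ref{lem:lb-vpd-single}, replacing \trap{3} by \trap{k+2} from the proof of Theorem~\ref{thm:VP-delta-k-agents}. Let $I$ be the current block of $k+2$ flagged interior vertices and $E$ the $m=n-k-2$ exterior vertices. In every step the adversary places $I$ as a contiguous arc of the ring. The $k$ agents sit on the $k$ ``inner'' positions of the arc. The two arc endpoints, the \emph{boundary interiors}, carry no agent and are the only interior vertices adjacent to exterior vertices. As in \trap{k+2}, whenever an agent moves onto a boundary position, the next snapshot swaps that vertex with a free inner vertex of $I$. Between exposures the agents therefore only ever see flagged vertices.

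\emph{Phase 1 (exterior--exterior pairs).} Contract $I$ to a mega vertex and run a Walecki sequence on the $m+1$ resulting vertices for $\lceil m/2\rceil$ steps. This covers every exterior--exterior pair and gives $2$ interior--exterior adjacencies per step. The adversary distributes these over the boundary interiors, rotating which members of $I$ are agent-free so that no vertex of $I$ is overloaded. In parallel, it reorders the arc $I$ from step to step so that every interior--interior pair becomes adjacent. This is cheap, since $I$ has only $k+2$ vertices and the adversary controls the order inside the arc, subject to keeping the agents inner.

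\emph{Phase 2 (interior--exterior pairs at free vertices).} There are $(k+2)m$ interior--exterior pairs, and each step serves at most two of them through the two boundary interiors. The interior--exterior pairs whose interior vertex can be made agent-free are served directly by rotating such vertices to the boundary, at most $\lceil (k+2)m/2\rceil$ steps in total.

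\emph{Phase 3 (exposures at agent-occupied vertices).} The delicate pairs are those whose interior vertex $v$ is occupied by an agent throughout. Copying Phase~3 of Lemma~\ref{lem:lb-vpd-single}, the adversary swaps $v$ to a boundary position and makes it adjacent to a missing exterior $x$. If the agent declines, the adversary repeats with the next missing exterior. If the agent moves to $x$, then $x$ is absorbed into $I$ and is kept inner from then on. The vacated $v$ becomes agent-free and its remaining exterior adjacencies are served as in Phase~2. Some agent-free flagged vertex is dropped from $I$ to keep $|I|=k+2$; the dropped vertex is flagged, so exposing it is harmless.

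I expect the flag count to be the main obstacle: arguing that over the period the agents flag at most $k$ new vertices, i.e.\ each agent at most once. The risk is that the absorbed vertex $x$ now holds an agent but still owes adjacencies. The first thing I would try is to schedule exposures only after Phase~1, so every such $x$ has already met all of $E$. Its remaining obligations are then towards interior vertices, which are flagged, and the adversary can meet them inside the arc without exposing the agent to anything unflagged. Thus each agent triggers at most one exposure that leads to a move, giving (ii). I would also order the exposures so that each agent-occupied vertex is brought to the boundary in a single contiguous block of steps.

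The time count is $\lceil m/2\rceil + \lceil (k+2)m/2\rceil$ plus $O(m)$ for each of the at most $k$ exposure sequences, if these can be overlapped with the Phase~2 service of the other boundary. The target is to show the total is at most $(k+1)m$, using $k+2\le n/2$ (so $m\ge k+2$) to absorb the additive terms. If overlapping proves impossible, I would first verify this count, which is where I expect the length $(k+1)(n-k-2)$ in the statement to come from.
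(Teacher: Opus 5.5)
\textbf{There is a genuine gap: the proposal sets up the paper's construction but leaves both quantitative claims of the lemma open, and the route it sketches for (ii) fails.} The construction itself matches the paper's: three phases, \trap{k+2}, Walecki schedules, and exposing an occupied interior vertex only as a last resort.

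\textbf{The flag count.} Knowing that $x$ has already met all of $E$ only shows that $x$ itself never needs exposing. It does not give ``each agent flags at most once''. The adversary can keep agents off the two boundary positions, but it cannot stop them from stepping between interior vertices. Also, which interior vertices are agent-free is the agents' choice; the adversary cannot rotate this. Concretely, suppose the vacated $v$ stays in the arc to be served ``as in Phase~2'' and the agents occupy $k$ distinct vertices. Then $v$ and one other vertex are the only free ones, and both must sit at the ends of the arc, so $v$'s inner neighbour is occupied. An agent (e.g.\ the one now on $x$) can step back onto $v$, be exposed there again, and flag again. This can repeat as long as $v$ still owes adjacencies.

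The paper counts per interior vertex instead of per agent. In each step it first serves agent-free unfinished interiors. It exposes an occupied vertex $u_j$ only when every agent-free interior is finished. If an agent takes the bait, it takes $u_j$ out of the block and completes $u_j$'s missing adjacencies there, while \trap{k+2} holds the agents. So every flag-producing exposure finishes a new interior vertex for good. Since at least two of the $k+2$ interiors (the agent-free ones) are finished before the first such exposure, at most $k$ flags occur.

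\textbf{The time count.} As written it does not close. Adding $k$ exposure sequences of length $O(m)$ to $\lceil m/2\rceil+\lceil (k+2)m/2\rceil$ cannot fit in $(k+1)m$. Already for $k=1$ the first two terms are at least $2m=(k+1)m$. In general they leave at most $(k-1)m/2$ for all the exposures.

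The missing observation is that exposure steps are not overhead. In every step the adversary places at least one unfinished interior vertex next to an exterior vertex it has not yet met; if nothing else is available, this is the exposed occupied vertex. That interior--exterior pair is realized whether or not the agent moves. The first $m$ steps realize $2m$ of the $(k+2)m$ interior--exterior pairs. Every later step realizes at least one more, so at most $km$ further steps are needed, for a total of exactly $(k+1)(n-k-2)$. So the period length comes from this one-new-pair-per-step worst case, not from any extra cost of exposures.
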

\begin{proof}
    
  As in the one-agent case, we build a \emph{period} with three phases. Fix $m = n- k-2$, the size of the set $E$. For simplicity, we assume that $\delta$ is exactly $(k+1)m$.

\smallskip
\noindent\textbf{Phase 1 ($\lceil m/2\rceil$ steps)}: 
Treat the whole interior block $I$ as a mega-vertex $\mathbf{V_m}$ and consider the complete graph on
$E\cup\{\mathbf{V_m}\}$, which has $m+1$ vertices. Using a Walecki Hamiltonian decomposition, we can list
$\lceil m/2\rceil$ Hamilton cycles whose union covers all edges of this clique. We realize each cycle
as a ring on $V$ by expanding $\mathbf{V_m}$ into a contiguous block containing the vertices of $I$
(in some order), while keeping the cyclic order on $E$ as prescribed by the cycle. Hence, in these
$\lceil m/2\rceil$ steps, every $E$-$E$ pair is adjacent at least once.

In parallel, during the same $\lceil m/2\rceil$ steps we permute the order of vertices inside $I$
according to a Walecki schedule on $K_{|I|}$, which is feasible because $m\ge k+2$ implies
$\lceil m/2\rceil \ge \lceil(|I|-1)/2\rceil$. Thus all $I$-$I$ pairs are also covered by the
end of Phase~1.

Finally, in each step, the mega-vertex $\mathbf{V_m}$ has two exterior neighbors, so the expansion
creates two $I$-$E$ adjacencies incident to the two boundary interiors. Therefore Phase~1
creates $2\lceil m/2\rceil\ge m$ distinct $I$-$E$ pairs. Moreover, each boundary interior is incident
to at most $\lceil m/2\rceil$ distinct exterior neighbors by the end of Phase~1.

Throughout Phase~1, the adversary uses the \trap{k+2}~procedure to ensure that both boundary interiors
are unoccupied by agent; hence agents cannot enter $E$ in Phase~1. 

\smallskip
\noindent\textbf{Phase 2 ($\lfloor m/2\rfloor$ steps): }
We continue to keep the agents away from the boundary interiors using the \trap{k+2}~ procedure.
In each step, we choose the two exterior vertices adjacent to the two boundary interiors so that each
boundary interior meets an exterior vertex it did not meet in Phase~1. This is possible because each
boundary interior met at most $\lceil m/2\rceil$ exteriors in Phase~1, so it still has at least
$m-\lceil m/2\rceil=\lfloor m/2\rfloor$ unseen exteriors.
After $\lfloor m/2\rfloor$ additional steps, we have executed a total of
$\lceil m/2\rceil+\lfloor m/2\rfloor=m$ steps in Phases~1--2.
Because of the \trap{k+2} procedure, the \emph{identity} of the two boundary
interiors may change over time; hence we cannot assert that two \emph{fixed} interior vertices have met
all exterior vertices by time~$m$. We can ensure that
 by time  $m$, we have realized
\emph{exactly $2m$ distinct} $I$-$E$ pairs, and  of the $(k+2)m$ possible $I$-$E$ pairs, exactly 
$km$ pairs remain to be realized.

\smallskip
\noindent\textbf{Phase~3 (at most $km$ steps).}
Let $I = \{u_1, u_2, \ldots, u_{k+2}\}$. We say a node in $I$ has {\em finished} once it has already been adjacent to all other nodes in this period. If there exist two nodes $u_i, u_j \in I$ that do not contain agents and are {\em not finished}, that is, they are still missing adjacencies with exterior nodes, we make them the boundary interior nodes, and place suitable exterior nodes next to them to take care of two missing adjacencies in the step. The remaining interior nodes are placed in between $u_i$ and $u_j$ in a block of $k+2$ nodes in arbitrary order, and the remaining exterior nodes make up another contiguous block between the two chosen exterior nodes in arbitrary order. 

If instead there exists one node $u_i \in I$ that does not contain agents and is still not finished, we make $u_i$ a  boundary interior node, we pick another node $u_j \in I $ without agents as the other boundary node. Such a node must exist, since there are $k$ agents and $k+2$ nodes in $I$. Furthermore $u_j$ must have finished. We place a suitable exterior node next to $u_i$ to take care of a missing adjacency. The remaining interior nodes and exterior nodes are placed arbitrarily as in the previous case.

Otherwise, the only nodes that do not contain agents have already finished. Take such a node, call it $u_i$ that is already finished and does not have agents.  
We make $u_i$ a boundary node. For the other boundary, pick a node $u_j$ from $I$ that is still missing adjacencies. The node $u_j$ contains some agents. We will now realize all missing adjacencies for $u_j$ so that $u_j$ is finished, while ensuring that at most one exterior node is flagged.
Place a suitable exterior node $x$ next to $u_j$ (that is, $x$ has not yet been adjacent to $u_j$ in this period), and place the remaining nodes interior nodes between $u_i$ and $u_j$ as in the other cases, and the remaining exterior nodes in a contiguous block.  between $x$ and $u_i$.

\begin{itemize}

\item If no agent moves from $u_j$ to $x$, the set $I$ is intact, and we repeat this procedure in the next step (once again placing an exterior node next to $u_j$, and finding an interior node with no agents for the other boundary interior node). 

\item If some agent does move to $x$ from $u_j$, we will temporarily change $I$ and follow a different {\em special procedure}. Note that if $x$ was unflagged, it will now be flagged, and we have increased the number of flagged vertices. However, this is the only time agents can visit an unflagged exterior vertex, while we follow the special procedure to finish $u_j$. We remove $u_j$ from $I$, and bring $x$ into $I$ temporarily. Until $u_j$ misses adjacencies with other nodes in $E$, we create the next few permutations by realizing the missing adjacencies for $u_j$, meanwhile using the \trap{k+2} procedure to keep all the agents in $I$. Once $u_j$ is finished, we put $x$ back in $E$, and $u_j$ back in $I$, and revert to the original procedure, described at the beginning of Phase 3.

\end{itemize}

Note that the special procedure of finishing a node from $I$ can be done only after at least two nodes from $I$ are already finished. This implies that during the entire period, the special procedure is done at most $k$ times, and at most $k$ previously unflagged vertices can be visited during the period.

By the end of Phase~3, all  pairs of adjacencies have been realized. All three phases together take $\lceil m/2 \rceil + \lfloor m/2 \rfloor + km$ steps. Therefore, every unordered pair 
$\{u,v\}\subseteq V$ appears as an edge
in at least one step within the period with $\delta =(k+1)m=(k+1)(n-k-2)$ steps, and at most $k$ new vertices are flagged, as claimed. Having chirality, knowledge of node labels, visibility of node labels, and memory does not help as agents collectively are not adjacent to more than $k$ unflagged vertices during a period.

\end{proof}

\begin{theorem}
\label{thm:lb-vpd-multiple}
For $1\le k \le \frac{n}{2}-2$ and $\delta \geq (k+1)(n-k-2)$, and for every  online algorithm $\mathcal{A}$ with $k$ agents, there exists an input in \vpd~such that
$
\tau_1^k \ \ge\ (\left\lceil \frac{n-2}{k} \right\rceil-2)\,\delta$,  even for agents with chirality, memory, full visibility and knowledge of node labels. \end{theorem}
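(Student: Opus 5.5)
The plan mirrors the proof of Theorem~\ref{thm:lb-vpd-single}, with Lemma~\ref{lem:lb-vpd-multiple} used in place of Lemma~\ref{lem:lb-vpd-single}. First run $\mathcal{A}$ on a static ring until, for the first time, the $k$ agents occupy a block of $k+2$ consecutive nodes. Until that moment the adversary keeps the permutation fixed. Either the agents never spread out, in which case a treasure placed outside their span is never found, or we declare that moment to be time $0$ and ignore the preceding steps. Let $I$ be this block of $k+2$ nodes. They are the only nodes that may have been flagged so far in the portion we count, and we treat all of them as flagged, which only helps the agents. The remaining $m=n-k-2$ nodes form $E$ and are unflagged.

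From time $0$ on, divide time into consecutive periods of length $\delta$. In each period the adversary plays the construction of Lemma~\ref{lem:lb-vpd-multiple}. That construction needs only $(k+1)(n-k-2)\le\delta$ steps; for the rest of the period it keeps the agents trapped with \trap{k+2}, choosing adjacencies arbitrarily. This is valid because the covering requirement of \vpd has already been met inside the period, and \trap{k+2} never lets an agent reach an exterior node. By part (i) of the lemma, every pair of vertices is adjacent within each period, so the resulting dynamic ring belongs to \vpd. By part (ii), at most $k$ new vertices are flagged per period.

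One point needs checking: the set $I$ changes when the special procedure of Phase~3 swaps an exterior vertex into $I$ (and back). I would argue that this is harmless. The lemma's invariant is that all interior vertices are flagged at the start of each period, and that at most $k$ exterior vertices become flagged per period. Hence after $r$ periods at most $rk$ exterior vertices are flagged. Moreover, the adversary can always re-form a valid block $I$ of $k+2$ flagged nodes containing the agents at the start of the next period, using flagged nodes as the new interior.

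To conclude, the adversary places the treasure at the last exterior vertex that becomes flagged; it fixes this choice by simulating the deterministic algorithm, or adaptively, as justified in Section~\ref{sec:model}. That vertex is unvisited until at least $\lceil m/k\rceil$ periods have passed. Since $m=n-k-2$, we have $\lceil (n-k-2)/k\rceil=\lceil (n-2)/k\rceil-1\ge \lceil (n-2)/k\rceil-2$, and each period has length $\delta$, so $\tau_1^k\ge(\lceil (n-2)/k\rceil-2)\delta$. The slack of one period absorbs the possibility that the treasure is reached partway through the final period.

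The main obstacle is the interface between periods. The lemma assumes a clean starting configuration, namely $k$ agents inside a flagged block $I$ of size $k+2$ whose boundary nodes are free of agents. Re-establishing this at each period boundary, without granting an extra flagged vertex or breaking the pair-coverage guarantee, is the step that needs care. If it costs something, I would spend one step per period on a \trap{k+2}-style rearrangement, absorbed into the $\delta-(k+1)m$ slack; this is another reason to give up one period in the bound.
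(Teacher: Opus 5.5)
Your window-by-window counting is the paper's argument: you apply Lemma~\ref{lem:lb-vpd-multiple} in each window, get at most $k$ newly visited vertices per window, and hide the treasure at the last exterior vertex, so $\lceil (n-k-2)/k\rceil-1=\lceil (n-2)/k\rceil-2$ full windows pass first. Padding each window with \trap{k+2} when $\delta>(k+1)m$, and re-forming $I$ at window boundaries, are sensible additions.

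The step that fails is the prelude. Keeping the ring static until the agents' span first reaches $k+2$ does not confine them.
For $k=1$, a single agent never occupies a block of $3$ nodes. Yet by walking clockwise on your static ring it visits every vertex within $n-1$ steps.
For general $k$, agents can first line up on consecutive vertices and then march in lockstep with span $k$, sweeping the whole ring.
So the branch ``they never spread out, hence never find the treasure'' is false. In the other branch, the agents may already have visited far more than $k+2$ vertices before your time $0$. Those visits cannot be ignored, because they shrink the pool of unvisited exterior vertices.
Two further problems follow from the static prelude. A static prelude of unbounded length is not in \vpd. Restarting the periods at an adversarially chosen time also misaligns them with the model's fixed windows, and a straddling window need not realize all pairs.
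The informal setup the paper gives just before Lemma~\ref{lem:lb-vpd-multiple} has the same defect, so you inherited it.

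The repair is to have no prelude at all. All agents start at a common vertex $s$. From the first snapshot on, the adversary fixes a block $I$ of $k+2$ vertices with $s$ strictly inside it, so both boundary vertices are agent-free. It then runs the construction of Lemma~\ref{lem:lb-vpd-multiple} in every window of the model, starting with the first one, and counts all of $I$ as visited; this only helps the agents.
With this change, every window realizes all pairs, so the input is in \vpd. After $r$ windows, the visited set is contained in $I$ together with at most $rk$ exterior vertices. Since $(\lceil m/k\rceil-1)k<m$, some exterior vertex is still unvisited after $\lceil m/k\rceil-1$ windows. The rest of your argument then goes through unchanged.
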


\begin{proof}
By Lemma~\ref{lem:lb-vpd-multiple}, in every period of length $\delta$ the agents can collectively visit at
most $k$ previously unvisited vertices.
After $r$ full periods, at most $rk$ new  vertices can have been visited. Recall that $k+2$ vertices are visited before the first period starts. 
Therefore after $r^\star=\lceil (n-k-2)/k \rceil - 1 = \lceil (n-2)/k \rceil - 2$ periods, there exists at least one exterior vertex that has not
been visited yet. Place the treasure at such a vertex. This treasure cannot be reached before the end of period $r^\star$, hence $
\tau_1^k \ \ge\ r^\star \delta \ =\ (\lceil (n-2)/k\rceil-2)\,\delta$. This proves the theorem.
\end{proof}

\begin{corollary}
\movek is asymptotically optimal in terms of worst-case search time for \vpd with $\delta \geq (k+1)(n-k-2)$.
\end{corollary}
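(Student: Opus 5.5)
The plan is to place the upper bound of the preceding theorem on \movek next to the lower bound of Theorem~\ref{thm:lb-vpd-multiple}, and check that their ratio is bounded by a constant on the stated range of $\delta$. We work in the regime where both results apply: $1\le k\le \frac{n}{2}-2$ and $\delta\ge (k+1)(n-k-2)$. Since $k\le n/2-2$, we have $(k+1)(n-k-2)\ge (n+2)/2\ge \lceil (n-1)/2\rceil$. So every such $\delta$ also satisfies the hypothesis of the upper-bound theorem, and \movek achieves $\tau_1^k\le \delta\lceil 2n/k\rceil$ on every input in \vpd.

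Next we compare this with what any algorithm must pay. Theorem~\ref{thm:lb-vpd-multiple} gives an input on which every online algorithm, even with chirality, memory, full visibility and knowledge of labels, has $\tau_1^k\ge (\lceil (n-2)/k\rceil-2)\delta$. The ratio of the two bounds is
\[
\frac{\lceil 2n/k\rceil}{\lceil (n-2)/k\rceil-2}.
\]
The numerator is at most $2n/k+1$. When $k$ is small relative to $n$, say $k\le (n-2)/4$, the denominator is at least $(n-2)/k-2\ge (n-2)/(2k)$. The ratio is then at most $(2n/k+1)\cdot 2k/(n-2)=O(1)$, since $n\ge 2k+4$. This gives $\tau_1^k=\Theta(n\delta/k)$ for both the algorithm and the worst case.

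The main obstacle is the remaining range $(n-2)/4<k\le n/2-2$. There $\lceil (n-2)/k\rceil-2$ can be $0$ or $1$ (for $k\approx n/2$ it equals $1$, and it can reach $0$ near the boundary), so the lower bound as stated degenerates. To handle it, we would use a direct lower bound of order $\delta$: it suffices to show that the adversary can delay discovery for a constant number of periods. Theorem~\ref{thm:lb-vpd-multiple} already yields at least one full period whenever $\lceil (n-2)/k\rceil\ge 3$. When it does not, we argue from Lemma~\ref{lem:lb-vpd-multiple} that, since $k+2\le n/2$, at least $n-2k-2\ge 2$ vertices remain unflagged after the first period. Placing the treasure at one of these forces $\tau_1^k\ge\delta$. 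Meanwhile the upper bound there is $\delta\lceil 2n/k\rceil\le 9\delta$. Hence \movek is within a constant factor of optimal throughout the stated range, and since $\lceil 2n/k\rceil$ and $n/k$ agree up to constants, its worst-case search time is $\Theta(n\delta/k)$.
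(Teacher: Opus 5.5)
Your proposal is correct and takes the paper's route: the paper gives no separate proof, and simply combines the $\delta\lceil 2n/k\rceil$ upper bound for \movek with the $(\lceil (n-2)/k\rceil-2)\,\delta$ lower bound of Theorem~\ref{thm:lb-vpd-multiple}, both being $\Theta(n\delta/k)$. One correction: for $k\le \frac{n}{2}-2$ you have $2k\le n-4<n-2$, so $\lceil (n-2)/k\rceil\ge 3$ and the lower-bound factor is always at least $1$, never $0$. Your fallback case via Lemma~\ref{lem:lb-vpd-multiple} is therefore vacuous (though valid), and the case split can be replaced by the single observation $\lceil (n-2)/k\rceil-2\ge\frac13\lceil (n-2)/k\rceil$.
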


\section{Conclusion and open problems}

We studied treasure hunt in vertex-permuted dynamic rings. We showed that, in
unrestricted \(VP\), the treasure may be hidden forever from any set of
\(k\leq n-3\) deterministic or randomized agents. 
For \(VP(\delta)\) with \(\delta\geq \lceil(n-1)/2\rceil\), we gave tight bounds 
for the search time and competitive ratio for a single deterministic
agent with flag, and showed that \(k\) agents can obtain a linear speedup. We also
studied the random model \(R\)-VP.

Several questions remain open. First, our lower bound for one deterministic
agent with flags is proved for \(\delta\geq 2n\). It would be interesting to know
whether the same bound holds for the full feasible range
\(\delta\geq \lceil(n-1)/2\rceil\). Second, the competitive ratio of randomized algorithms against an adaptive online adversary is yet to be studied.  Finally, one may ask which results
extend beyond rings to vertex-permuted graphs with other topologies.

\newpage

\bibliographystyle{splncs04}
\bibliography{paper.bib}


\end{document}